\numberwithin{equation}{section}
\newtheorem{theorem}{Theorem}[section]
\newtheorem{lemma}[theorem]{Lemma}
\newtheorem{definition}[theorem]{Definition}
\newtheorem{remark}[theorem]{Remark}
\newtheorem{proposition}[theorem]{Proposition}
 \newcommand{\mt}{\mathbf{t}}
 \newcommand{\p}{\partial}
\begin{document}
	
\title[Matrix-valued Cauchy bi-orthogonal polynomials]{Matrix-valued Cauchy bi-orthogonal polynomials and a novel noncommutative integrable lattice}

\subjclass[2020]{39A36,~15A15}
\date{}

\dedicatory{}

\keywords{matrix-valued Cauchy bi-orthogonal polynomials; noncommutative C-Toda lattice; quasideterminants; direct method}

\author{Shi-Hao Li}
\address{Department of Mathematics, Sichuan University, Chengdu, 610064, PR China}
\email{shihao.li@scu.edu.cn}

\author{Ying Shi}
\address{School of Science, Zhejiang University of Science and Technology, Hangzhou, 310023, PR China}
\email{yingshi@zust.edu.cn}

\author{Guo-Fu Yu}
\address{School of Mathematical Sciences, Shanghai Jiaotong University, Shanghai, PR China.}
\email{gfyu@sjtu.edu.cn}

\author{Jun-Xiao Zhao}
\address{School of Mathematics and Statistics, University of Glasgow, University Place, Glasgow G12 8QQ, UK}
\email{Jun-Xiao.Zhao@glasgow.ac.uk}

\begin{abstract}
\noindent Matrix-valued Cauchy bi-orthogonal polynomials were proposed in this paper, together with its quasideterminant expression. It is shown that the coefficients in four-term recurrence relation for matrix-valued Cauchy bi-orthogonal polynomials should satisfy a novel noncommutative integrable system, whose Lax pair is given by fractional differential operators with non-abelian variables.
\end{abstract}

\maketitle

\section{introduction}

Noncommutative generalizations of integrable systems have been widely studied with noncommutative space-time variables, which arise from quantization of the phase space. 
In kinds of literature, there have been a large number of noncommutative generalizations of integrable systems, including KP and its related equations \cite{dimakis05,gilson071,gilson072}, Toda lattice and related equations \cite{etingof97,etingof98,krichever81,li09}, Painlev\'e equations \cite{bertola18,retahk10}, and so on.  

The study of closed expressions for solutions is one of the most important topics in soliton theory and 
Hirota's bilinear method (or so-called direct method) is one of the powerful ways of finding solutions \cite{hirota04}. 
It has been widely shown that solutions of classical integrable systems could be written in terms of determinantal  or Pfaffian $\tau$-functions.
The reason owes to Sato's observation, demonstrating that the Pl\"ucker relation on Grassmannian is an integrable system. 
The Pl\"ucker relations are actually identities for determinants if local coordinates are taken \cite{miwa99}. 
However, as determinants and Pfaffians are usually defined by commutative elements and they could hardly act as $\tau$-functions for noncommutative systems, a new algebraic tool is needed. 
The achievement was made by quasideterminants, which was proposed by Gelfand et al \cite{gelfand91,gelfand92,gelfand05}, and it was shown that quasideterminants could act as solutions for noncommutative integrable systems. 
Besides, quasideterminants take advantages in various fields such as matrix-valued orthogonal polynomials, noncommutative combinatorics, noncommutative symmetric functions, continued fractions with noncommutative coefficients, and so on \cite{di11}. 
Therefore, quasideterminants establish ties for all these subjects.

We mainly consider the connection between a new family of matrix-valued orthogonal polynomials called matrix-valued Cauchy bi-orthogonal polynomials, and related noncommutative integrable systems in this paper. 
The study of matrix-valued orthogonal polynomials could date back to Krein about matrix-valued moment problem \cite{krein49}. 
In recent years, it was found that matrix-valued orthogonal polynomials could work as wave functions for noncommutative Toda and Lotka-Volterra lattices \cite{li21,branquinho20,ismail19}, as well as Painlev\'e equations \cite{cafasso14}. 
Such connections were also applied to noncommutative Hermite-Pad\'e approximation \cite{doliwa22} and Wynn recurrence \cite{doliwa222}. 
By introducing a matrix-valued inner product 
\begin{align*}
\langle f(x),g(x)\rangle=\int_{\mathbb{R}}f(x)W(x)g(x)dx
\end{align*}
where $f(x),\,g(x)\in\mathbb{R}^{p\times p}[x]$ are matrix-valued polynomials and $W(x):\, \mathbb{R}\to\mathbb{R}^{p\times p}$ is a symmetric matrix-valued weight function satisfying $W(x)=W^\top(x)$,  one could define a family of matrix-valued orthogonal polynomials $\{P_n(x)\}_{n\in\mathbb{N}}$ through the orthogonal relation $\langle P_n(x),P^\top_m(x)\rangle=H_n\delta_{n,m}$ for some nonsingular normalization factor $H_n$. 
It was shown that matrix-valued orthogonal polynomials satisfy a three-term recurrence relation
\begin{align*}
xP_n(x)=P_{n+1}(x)+a_nP_n(x)+b_nP_{n-1}(x),\quad a_n,\, b_n\in\mathbb{R}^{p\times p},\quad n\in\mathbb{N}.
\end{align*}
Moreover, if time parameters $\{t_i,i\in\mathbb{N}\}$ are introduced into the weight function $W$ such that $\p_{t_i}W(x;\mt)=x^iW(x;\mt)$, then 
$
\p_{t_1}P_n(x;\mt)=-b_nP_{n-1}(x;\mt),
$
and a noncommutative version of Toda hierarchy could be obtained from compatibility condition $\p_{t_1}(xP_n(x;\mt))=x\p_{t_1}P_n(x;\mt)$ \cite{li21}.

We plan to study a matrix-valued generalization of Cauchy bi-orthogonal polynomials and its related noncommutative integrable systems in this paper. 
In recent years,  with the development of interdisciplinary research on random matrices, orthogonal polynomials and integrable systems, connections between variants of orthogonal polynomials and integrable systems have been made, see e.g. \cite{adler03,adler09,chang18,li19} and references therein. 
Cauchy bi-orthogonal polynomials were proposed in the study of the Degasperis-Procesi peakon problem and later found its applications in random matrix model \cite{lundmark05,bertola09,bertola10}. 
With the Cauchy kernel $(x+y)^{-1}$ and nonnegative weights $\omega_1,\,\omega_2$, an inner product over $\mathbb{R}[x]\times\mathbb{R}[y]\to\mathbb{R}$ could be introduced 
\begin{align}\label{cauchy}
\langle f(x),g(y)\rangle=\int_{\mathbb{R}_+\times \mathbb{R}_+}\frac{f(x)g(y)}{x+y}\omega_1(x)\omega_2(y)dxdy.
\end{align}
From the inner product, one can define Cauchy bi-orthogonal polynomials $\{P_n(x),Q_n(y)\}_{n\in\mathbb{N}}$, which are dual to each other, by orthogonality
\begin{align*}
\langle P_n(x),Q_m(y)\rangle=h_n\delta_{n,m}
\end{align*}
for some nonzero constant $h_n\in\mathbb{R}$. 
In \cite{li19}, it was found that symmetric Cauchy bi-orthogonal polynomials (i.e. when weight functions $w_1$ and $w_2$ are equivalent) are related to an integrable lattice equation, which enjoys the same $\tau$-function with CKP equation\footnote{CKP equation is a 2-dimensional generalization of Kaup-Kuperschmidt equation. It is called CKP equation because it is related to the infinite dimensional Lie algebra of type $C$ according to the classification of classical integrable systems \cite{jimbo83}.}, and thus we call the corresponding lattice equation as the Toda lattice of CKP type or C-Toda lattice for brevity. 
Therefore, it is of interest to consider a matrix-valued version of Cauchy bi-orthogonal polynomials and a noncommutative generalization of C-Toda lattice. 
For this purpose, we introduce a symmetric matrix-valued inner product
\begin{align*}
\langle\cdot,\cdot\rangle: \mathbb{R}^{p\times p}[x]\times \mathbb{R}^{p\times p}[y]\to\mathbb{R}^{p\times p},\quad \langle P(x),Q(y)\rangle=\int_{\mathbb{R}_+\times \mathbb{R}_+}\frac{P(x)W(x)W(y)Q(y)}{x+y}dxdy
\end{align*}
with respect to a matrix-valued weight function $W(x)$.

Therefore, one main purpose of this paper is to study the matrix-valued Cauchy bi-orthogonal polynomials, including their quasi-determinant expressions and corresponding recurrence relations. 
We show that the coefficients of the recurrence relations can also be expressed in terms of quasideterminants. 
The other purpose is to introduce proper time flows, from which a novel noncommutative integrable system is obtained by compatibility condition.
This paper is organized as follows. In Section \ref{sec2}, we give a formal definition for matrix-valued Cauchy bi-orthogonal polynomials and their quasi-determinant expressions. We derive a four-term recurrence relation for the matrix-valued Cauchy bi-orthogonal polynomials and show the coefficients in the recurrence relation could also be written in terms of quasideterminants. In Section \ref{sec3}, we introduce proper time flows and hence derive a novel noncommutative integrable lattice. We provide a Lax pair to this novel noncommutative integrable lattice, and verify it through a direct method by quasideterminant identities. Some concluding remarks are given in Section \ref{sec4}.

\section{Matrix-valued Cauchy bi-orthogonal polynomials}	\label{sec2}
	
The theory of matrix orthogonal polynomials has attracted much attention in different aspects of mathematics such as approximation problems on Riemann sphere \cite{bertola21}, quasi-birth-and-death processes \cite{grunbaum08}, random tiling problems \cite{duits20} and non-commutative integrable systems \cite{alvarez17, li21}. 
Such matrix-valued orthogonal polynomial is a non-commutative generalization of standard orthogonal polynomial and related matrix-valued moment problem was proposed by Krein in late 1940s \cite{krein49}. 
In this setting, a bounded matrix-valued Borel measure $d\mu$ on the real line is considered, which maps from the real line $\mathbb{R}$ to a matrix ring $\mathbb{R}^{p\times p}$. 
According to the Radon-Nikodym theorem, such a measure could be expressed in terms of a weight function $W(x)dx$, and a matrix-valued inner product could be written as 
\begin{align*}
\langle \cdot,\cdot\rangle: \mathbb{R}^{p\times p}[x]\times \mathbb{R}^{p\times p}[x]\rightarrow \mathbb{R}^{p\times p},\quad \langle P(x),Q(x)\rangle=\int_{\mathbb{R}}P(x)W(x)Q(x)dx.
\end{align*}
To connect it with non-abelian integrable lattices, we require that $W(x)$ to be symmetric and there exists a family of matrix-valued orthogonal polynomials $\{P_n(x)\}_{n\in\mathbb{N}}$ satisfying orthogonal relation
\begin{align*}
\langle P_n(x), P_m^\top (x)\rangle=H_n\delta_{n,m},
\end{align*}
where $H_n$ is a nonsingular matrix in $\mathbb{R}^{p\times p}$ and $\top$ represents the transpose of the matrix-valued orthogonal polynomials.

Here we propose a bi-orthogonal generalization of the matrix-valued orthogonal polynomials. 
In general, we could consider an inner product over $\mathbb{R}[x]\times\mathbb{R}[y]$ such that
\begin{align*}
\langle P(x),Q(y)\rangle=\int_{\mathbb{R}\times\mathbb{R}} P(x)W_1(x)K(x,y)W_2(y)Q(y)dxdy,
\end{align*}
where $K(x,y)$ is a matrix-valued kernel or a scalar kernel.
In this paper, we consider a specific scalar Cauchy kernel $(x+y)^{-1}$ and an inner product with respect to this kernel  is defined by
\begin{align}\label{innerproduct}
\langle\cdot,\cdot\rangle:\,\mathbb{R}^{p\times p}[x]\times \mathbb{R}^{p\times p}[y]\to\mathbb{R}^{p\times p},\quad \langle P(x),Q(y)\rangle=\int_{\mathbb{R}_+\times \mathbb{R}_+}\frac{P(x)W_1(x)W_2(y)Q(y)}{x+y}dxdy.
\end{align}
Here we require the integration space is a subset of $\mathbb{R}_+\times \mathbb{R}_+$ to avoid the singularity from the kernel. $W_1(x)$ and $W_2(y)$ are thus some matrix-valued weight functions from $\mathbb{R}_+$ to $\mathbb{R}^{p\times p}$. For example, $W_1$ and $W_2$ could be taken as some matrix-valued Laguerre weights. Clearly, we have the following properties for the inner product
\begin{enumerate}
\item Linearity. For any $L_1,\,L_2,\,R_1,\,R_2\in\mathbb{R}^{p\times p}$, we have
 \begin{align*}
& \langle L_1P_1(x)+L_2P_2(x),Q(y)\rangle=L_1\langle P_1(x),Q(y)\rangle+L_2\langle P_2(x),Q(y)\rangle,\\
 &\langle P(x),Q_1(y)R_1+Q_2(y)R_2\rangle=\langle P(x),Q_1(y)\rangle R_1+\langle P(x),Q_2(y)\rangle R_2.
\end{align*}
\item If $W_1$ and $W_2$ are symmetric, then $\langle P(x),Q(y)\rangle^\top=\langle Q^\top(x),P^\top(y)\rangle$.
\end{enumerate}
Besides, we need two requirements on the inner product. One is the existence and finiteness of the moments 
\begin{align*}
m_{i,j}=\langle x^i\mathbb{I}_p,y^j\mathbb{I}_p\rangle=\int_{\mathbb{R}_+\times\mathbb{R}_+}\frac{x^iy^j}{x+y}W_1(x)W_2(y)dxdy\in\mathbb{R}^{p\times p},\quad i,\,j\in\mathbb{N}.
\end{align*}
We need to require that $W_1(x)$ and $W_2(y)$ should be properly taken such that all $\{m_{i,j}\}_{i,j\in\mathbb{N}}$ exist and are finite. The second is the invertibility of principle minors of moment matrices. For any $n\in\mathbb{N}$, principle minors of moment matrices $\left(
m_{i,j}
\right)_{i,j=0}^{n}$ should be invertible. We call these requirements as moment conditions. 

Now we could give the following definition for matrix-valued Cauchy bi-orthogonal polynomials. 
\begin{definition}
With inner product \eqref{innerproduct} satisfying moment conditions,  a family of monic matrix-valued Cauchy bi-orthogonal polynomials $\{P_n(x), Q_n(x)\}_{n=0}^\infty$ are defined by\footnote{A monic matrix-valued polynomial in the matrix-valued polynomial ring $\mathbb{R}^{p\times p}[x]$ is set to be like $\mathbb{I}_px^{n}+a_{n,n-1}x^{n-1}+\cdots+a_{n,0}$, where $a_{n, i}\in\mathbb{R}^{p\times p}$ for all $i=0, \cdots, n-1$, and $\mathbb{I}_p$ is the unity in the matrix ring.}
\begin{align}\label{or}
\langle P_n(x),Q_m(y)\rangle=H_n\delta_{n,m}
\end{align} for some $H_n\in\mathbb{R}^{p\times p}$.
\end{definition}
In fact, the orthogonal relation \eqref{or} is equivalent to 
\begin{align*}
\langle P_n(x),y^i\mathbb{I}_p\rangle=0,\quad 0\leq i\leq n-1
\end{align*}
according to the linearity of the inner product. 
Moreover, if we assume that
\begin{align*}
P_n(x)=\mathbb{I}_px^n+\xi_{n,n-1}x^{n-1}+\cdots+\xi_{n,0}\in\mathbb{R}^{p\times p}[x],
\end{align*}
then the above orthogonal relation is in fact a combination of linear systems with matrix elements coefficients
\begin{align*}
\xi_{n,0}m_{0,j}+\xi_{n,1}m_{1,j}+\cdots+\xi_{n,n-1}m_{n-1,j}+m_{n,j}=0,\quad j=0,\cdots,n-1.
\end{align*}
Therefore, the existence and uniqueness of bi-orthogonal polynomials are equivalent to the existence and uniqueness for solutions in noncommutative linear systems.
According to the quasideterminant theory proposed by Gelfand et al \cite{gelfand05} (a quick introduction is given in the appendix), if moment conditions are satisfied, then the coefficients of matrix-valued Cauchy bi-orthogonal polynomials are solved to be
\begin{align}\label{xin}
\xi_{n,i}=-\left(
m_{n,0},\cdots,m_{n,n-1}
\right)\left(\begin{array}{ccc}
m_{0,0}&\cdots&m_{0,n-1}\\
\vdots&&\vdots\\
m_{n-1,0}&\cdots&m_{n-1,n-1}
\end{array}
\right)^{-1}e_{i+1}^\top,
\end{align}
where $e_i=(0,\cdots,\mathbb{I}_p,\cdots,0)$ is the block unit vector, whose $i$-th element is the unity  and the others are zeros. 
By substituting those coefficients into the expression of $P_n(x)$, we obtain
\begin{align*}
P_n(x)&=x^n\mathbb{I}_p-\left(
m_{n,0},\cdots,m_{n,n-1}
\right)\left(\begin{array}{ccc}
m_{0,0}&\cdots&m_{0,n-1}\\
\vdots&&\vdots\\
m_{n-1,0}&\cdots&m_{n-1,n-1}
\end{array}
\right)^{-1}\left(\begin{array}{c}
\mathbb{I}_p\\\vdots\\x^{n-1}\mathbb{I}_p
\end{array}
\right)\\
&=\left|\begin{array}{cccc}
m_{0,0}&\cdots&m_{0,n-1}&\mathbb{I}_p\\
\vdots&&\vdots&\vdots\\
m_{n-1,0}&\cdots&m_{n-1,n-1}&x^{n-1}\mathbb{I}_p\\
m_{n,0}&\cdots&m_{n,n-1}&\boxed{x^n\mathbb{I}_p}\end{array}
\right|.
\end{align*}

In a similar manner, we could get an explicit formula for $Q_n(x)$ and
\begin{align*}
Q_n(x)=\left|\begin{array}{cccc}
m_{0,0}&\cdots&m_{0,n-1}&m_{0,n}\\
\vdots&&\vdots&\vdots\\
m_{n-1,0}&\cdots&m_{n-1,n-1}&m_{n-1,n}\\
\mathbb{I}_p&\cdots&x^{n-1}\mathbb{I}_p&\boxed{x^n\mathbb{I}_p}
\end{array}
\right|.
\end{align*}

To find its connection with integrable systems, we need to further require that weight functions $W_1$ and $W_2$ are identical and symmetric, i.e. $W_1=W_2=W_1^\top$. 
With this assumption, we know that moments are symmetric, i.e. $m_{i, j}=m_{j, i}^\top$, and we have the following lemma.
\begin{lemma}
With weight functions set to be identical and symmetric, matrix-valued Cauchy bi-orthogonal polynomials defined in \eqref{or} have the relation
\begin{align*}
Q_n^\top(x)=P_n(x).
\end{align*}
\end{lemma}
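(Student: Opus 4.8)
The plan is to deduce the identity from the \emph{uniqueness} of the two families rather than from their explicit quasideterminant formulas. Under the moment conditions the linear systems defining the coefficients have unique solutions, so $P_n$ is the unique monic element of $\mathbb{R}^{p\times p}[x]$ of degree $n$ satisfying $\langle P_n(x),y^i\mathbb{I}_p\rangle=0$ for $0\le i\le n-1$, as already recorded in the excerpt. By the symmetric argument applied to the second slot (using right-linearity together with the fact that $Q_0,\dots,Q_{n-1}$ are monic of degrees $0,\dots,n-1$ and hence span, as a right $\mathbb{R}^{p\times p}$-module, the same space as $\mathbb{I}_p,\dots,y^{n-1}\mathbb{I}_p$), the polynomial $Q_n$ is the unique monic element of degree $n$ with $\langle x^i\mathbb{I}_p,Q_n(y)\rangle=0$ for $0\le i\le n-1$. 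It therefore suffices to show that $Q_n^\top$ meets the two defining requirements of $P_n$.

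First I would check that $Q_n^\top$ is monic of degree $n$: since the leading coefficient of $Q_n$ is $\mathbb{I}_p$ and $\mathbb{I}_p^\top=\mathbb{I}_p$, the leading coefficient of $Q_n^\top$ is again $\mathbb{I}_p$. Next I would verify the orthogonality. Because $W_1=W_2=W_1^\top$, property (2) of the inner product applies; taking $P(x)=x^i\mathbb{I}_p$ and $Q(y)=Q_n(y)$ gives
\[
\langle Q_n^\top(x),y^i\mathbb{I}_p\rangle=\langle x^i\mathbb{I}_p,Q_n(y)\rangle^\top=0,\qquad 0\le i\le n-1,
\]
where I used that the polynomial transpose of $x^i\mathbb{I}_p$ is $y^i\mathbb{I}_p$ and that the right-hand side vanishes by the dual characterization of $Q_n$. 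Since $Q_n^\top$ is monic of degree $n$ and annihilates $y^i\mathbb{I}_p$ for $i<n$, uniqueness forces $Q_n^\top=P_n$.

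An alternative, purely computational route would start from the two quasideterminant formulas. One first establishes the transpose rule $\bigl(|A|_{pq}\bigr)^\top=|A^\top|_{qp}$ for quasideterminants (immediate from $|A|_{pq}=a_{pq}-r_p^{\,q}(A^{pq})^{-1}c_q^{\,p}$ together with $(ABC)^\top=C^\top B^\top A^\top$), applies it to the $(n+1,n+1)$ quasideterminant defining $Q_n$, and then uses the moment symmetry $m_{i,j}^\top=m_{j,i}$ recorded just before the lemma to recognise the transposed array as exactly the matrix defining $P_n$. Both routes are short; the only genuine bookkeeping — and hence the main point to get right — is the noncommutative module structure behind the two orthogonality characterizations, and the careful handling of matrix transposition inside the inner product and the quasideterminant, since transposing reverses the order of the noncommuting factors. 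I would present the first route as the main proof, as it isolates the structural reason (namely $m_{i,j}^\top=m_{j,i}$, transmitted through property (2)) without unwinding the determinantal expressions.
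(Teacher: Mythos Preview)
Your uniqueness argument is correct, and it is a genuinely different route from the paper's. The paper takes exactly what you describe as the ``alternative, purely computational route'': it writes $Q_n^\top(x)$ out from the explicit formula, transposes the product, and observes that the resulting expression coincides with the formula for $P_n(x)$ once one uses $m_{i,j}^\top=m_{j,i}$. In other words, the paper unwinds the quasideterminant expression directly, while your main proof avoids touching the explicit formulas and instead leverages the characterization of $P_n$ (and, dually, of $Q_n$) together with property~(2) of the inner product. Your approach makes the structural reason transparent and would survive unchanged if the explicit formulas were replaced by some other representation; the paper's approach is a one-line verification once the formulas are in hand and does not need to re-argue uniqueness or the dual characterization of $Q_n$. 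Both are short and both ultimately rest on the moment symmetry $m_{i,j}^\top=m_{j,i}$.
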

\begin{proof}
This is a direct verification that
\begin{align*}
Q_n^\top(x)=x^n\mathbb{I}_p-(m_{0,n}^\top,\cdots,m_{n-1,n}^\top)\left(\begin{array}{ccc}
m_{0,0}^\top&\cdots&m_{n-1,0}^\top\\
\vdots&&\vdots\\
m_{0,n-1}^\top&\cdots&m_{n-1,n-1}^\top
\end{array}
\right)^{-1}\left(\begin{array}{c}
\mathbb{I}_p\\
\vdots\\x^{n-1}\mathbb{I}_p
\end{array}
\right)=P_n(x).
\end{align*}
\end{proof}
Since $Q_n(y)$ could be expressed in terms of $P_n(y)$, there is only one family of symmetric Cauchy bi-orthogonal polynomials. We call this family of polynomials as matrix-valued Cauchy orthogonal polynomials from now on. 

To summarize, we have the following proposition. 
\begin{proposition}
With inner product \eqref{innerproduct} in which $W(x)$ is set to be symmetric, a family of Cauchy orthogonal polynomials $\{P_n(x)\}_{n\in\mathbb{N}}$ is given by the orthogonal relation
\begin{align*}
\langle P_n(x), P_m^\top(y)\rangle=H_n\delta_{n,m},
\end{align*}
where $P_n(x)$ and $H_n$ have the following quasideterminant expressions
\begin{align}\label{qp}
P_n(x)=\left|\begin{array}{cccc}
m_{0,0}&\cdots&m_{0,n-1}&\mathbb{I}_p\\
\vdots&&\vdots&\vdots\\
m_{n-1,0}&\cdots&m_{n-1,n-1}&x^{n-1}\mathbb{I}_p\\
m_{n,0}&\cdots&m_{n,n-1}&\boxed{x^n\mathbb{I}_p}\end{array}
\right|,\quad H_n=\left|\begin{array}{cccc}
m_{0,0}&\cdots&m_{0,n-1}&m_{0,n}\\
\vdots&&\vdots&\vdots\\
m_{n-1,0}&\cdots&m_{n-1,n-1}&m_{n-1,n}\\
m_{n,0}&\cdots&m_{n,n-1}&\boxed{m_{n,n}}\end{array}
\right|
\end{align}
with $H_n=H_n^\top$.
\end{proposition}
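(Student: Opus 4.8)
The plan is to assemble the proposition from the ingredients already established, since the quasideterminant formula for $P_n(x)$ in \eqref{qp} was derived above from the coefficients \eqref{xin}. By the preceding lemma, the hypothesis that $W$ is symmetric gives $Q_n^\top(x)=P_n(x)$, hence $Q_n(y)=P_n^\top(y)$, and the defining bi-orthogonality $\langle P_n(x),Q_m(y)\rangle=H_n\delta_{n,m}$ is literally the stated relation $\langle P_n(x),P_m^\top(y)\rangle=H_n\delta_{n,m}$. Thus the only genuinely new objects are the closed form of $H_n$ and its symmetry.

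First I would compute $H_n=\langle P_n(x),Q_n(y)\rangle$. Since $Q_n(y)$ is monic, $Q_n(y)=y^n\mathbb{I}_p+\sum_{i=0}^{n-1}y^i\mathbb{I}_p\,\eta_{n,i}$ for some right coefficients $\eta_{n,i}$, and the orthogonality $\langle P_n(x),y^i\mathbb{I}_p\rangle=0$ for $0\le i\le n-1$ together with right-linearity collapses $H_n$ to $\langle P_n(x),y^n\mathbb{I}_p\rangle$. Expanding $P_n(x)=x^n\mathbb{I}_p+\sum_{i=0}^{n-1}\xi_{n,i}x^i$ and using left-linearity gives $H_n=m_{n,n}+\sum_{i=0}^{n-1}\xi_{n,i}m_{i,n}$, where the coefficients $\xi_{n,i}$ must be kept to the left of the moments. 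Substituting \eqref{xin} and recognizing $\sum_{i=0}^{n-1}e_{i+1}^\top m_{i,n}$ as the column $(m_{0,n},\cdots,m_{n-1,n})^\top$ then yields
\begin{align*}
H_n=m_{n,n}-(m_{n,0},\cdots,m_{n,n-1})\Big[(m_{i,j})_{i,j=0}^{n-1}\Big]^{-1}\left(\begin{array}{c}m_{0,n}\\\vdots\\m_{n-1,n}\end{array}\right),
\end{align*}
which is exactly the expansion of the quasideterminant in \eqref{qp} about its boxed bottom-right entry $m_{n,n}$.

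For the symmetry $H_n=H_n^\top$ I would invoke property (2) of the inner product, valid because $W$ is symmetric: $\langle P(x),Q(y)\rangle^\top=\langle Q^\top(x),P^\top(y)\rangle$. Taking $P=P_n$ and $Q=P_n^\top$ gives $H_n^\top=\langle P_n(x),P_n^\top(y)\rangle^\top=\langle P_n(x),P_n^\top(y)\rangle=H_n$. Alternatively this can be read off the displayed formula: the moment symmetry $m_{i,j}=m_{j,i}^\top$ makes the block matrix $(m_{i,j})_{i,j=0}^{n-1}$ symmetric and swaps the row border $(m_{n,0},\cdots,m_{n,n-1})$ with the transpose of the column border under transposition, so transposing the expression reproduces it.

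The individual steps are routine; the point that demands care is the bookkeeping of matrix order in the noncommutative setting. In particular, one must keep the coefficients $\xi_{n,i}$ on the left of $m_{i,n}$ throughout and then recognize the resulting left-multiplication expression as a genuine quasideterminant with the correct boxed pivot, rather than treating it as an ordinary Schur complement. Once that identification is secured, both the closed form for $H_n$ and its symmetry follow at once.
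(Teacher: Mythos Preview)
Your proposal is correct and follows exactly the route the paper sets up: the proposition is stated there as a summary of the preceding discussion (without its own proof), and your argument supplies precisely the natural details---using the previously derived quasideterminant form of $P_n(x)$, the lemma $Q_n^\top=P_n$, the collapse $H_n=\langle P_n(x),y^n\mathbb{I}_p\rangle$ via orthogonality, and property~(2) of the inner product for $H_n=H_n^\top$. There is nothing to add or correct.
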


\subsection{Recurrence relation for matrix-valued Cauchy orthogonal polynomials}
The most important characterization in the inner product \eqref{innerproduct} is the Cauchy kernel $(x+y)^{-1}$. From this inner product, we have
\begin{align}\label{iteration}
\begin{aligned}
\langle xf(x),g(y)\rangle&=\int_{\mathbb{R}_+\times\mathbb{R}_+}\frac{xf(x)W(x)W(y)g(y)}{x+y}dxdy\\
&=\int_{\mathbb{R}_+}f(x)W(x)dx\int_{\mathbb{R}_+}W(y)g(y)dy-\langle f(x),yg(y)\rangle,
\end{aligned}
\end{align}
which holds for arbitrary $f(x)\in\mathbb{R}^{p\times p}[x]$, $g(y)\in\mathbb{R}^{p\times p}[y]$. 
This formula is useful when inducing corresponding recurrence relations.

\begin{proposition}
The matrix-valued Cauchy orthogonal polynomials $\{P_n(x)\}_{n\in\mathbb{N}}$ satisfy the following recurrence relation
\begin{align}\label{ftr}
x(P_{n+1}(x)+a_nP_n(x))=P_{n+2}(x)+b_nP_{n+1}(x)+c_nP_n(x)+d_nP_{n-1}(x)
\end{align}
for some certain $a_n$, $b_n$, $c_n$, $d_n\in\mathbb{R}^{p\times p}$.
\end{proposition}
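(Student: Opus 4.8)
The plan is to expand the degree-$(n+2)$ matrix polynomial $x(P_{n+1}(x)+a_nP_n(x))$ in the monic basis $\{P_k(x)\}_{k=0}^{n+2}$ and to choose $a_n$ so that every coefficient below $P_{n-1}$ is annihilated. Since each $P_k$ is monic of degree $k$, any matrix polynomial $R(x)$ of degree at most $N$ admits a unique expansion $R(x)=\sum_{k=0}^{N}\gamma_kP_k(x)$ with left matrix coefficients $\gamma_k\in\mathbb{R}^{p\times p}$ (peel off the leading block, then recurse). Pairing against $P_k^\top(y)$ and using the orthogonality \eqref{or} together with the invertibility of $H_k$ (guaranteed by the moment conditions) gives the extraction formula $\gamma_k=\langle R(x),P_k^\top(y)\rangle H_k^{-1}$. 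Applying this with $R(x)=x(P_{n+1}(x)+a_nP_n(x))$, it suffices to prove $\gamma_k=0$ for $0\le k\le n-2$.

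To evaluate these coefficients I would use the Cauchy-kernel identity \eqref{iteration}. Writing $I(P)=\int_{\mathbb{R}_+}P(x)W(x)\,dx$ and $J(Q)=\int_{\mathbb{R}_+}W(y)Q(y)\,dy$, identity \eqref{iteration} reads $\langle xf(x),g(y)\rangle=I(f)J(g)-\langle f(x),yg(y)\rangle$. Taking $f=P_{n+1}$ (resp.\ $f=P_n$) and $g=P_k^\top$, the polynomial $yP_k^\top(y)$ has degree $k+1\le n-1$ whenever $k\le n-2$, so by orthogonality both $\langle P_{n+1},yP_k^\top\rangle$ and $\langle P_n,yP_k^\top\rangle$ vanish. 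Hence, for $k\le n-2$,
\[
\gamma_kH_k=\langle x(P_{n+1}+a_nP_n),P_k^\top\rangle=\big(I(P_{n+1})+a_nI(P_n)\big)J(P_k^\top).
\]
The decisive feature is that $I(P_{n+1})+a_nI(P_n)=I(P_{n+1}+a_nP_n)$ is a common left factor, independent of $k$. Choosing $a_n$ so that this factor vanishes, i.e.\ $a_n=-I(P_{n+1})\,I(P_n)^{-1}$, forces $\gamma_k=0$ for all $k\le n-2$ simultaneously.

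It then remains to read off the surviving terms. The top-degree contribution to $x(P_{n+1}+a_nP_n)$ is $x\cdot x^{n+1}\mathbb{I}_p=x^{n+2}\mathbb{I}_p$, coming solely from $xP_{n+1}$, so $\gamma_{n+2}=\mathbb{I}_p$ and $P_{n+2}$ enters monically. Setting $b_n=\gamma_{n+1}$, $c_n=\gamma_n$, $d_n=\gamma_{n-1}$, each of which is given explicitly by the extraction formula (for instance $d_n=\langle x(P_{n+1}+a_nP_n),P_{n-1}^\top\rangle H_{n-1}^{-1}$, which the same computation reduces to $-a_nH_nH_{n-1}^{-1}$), produces exactly the four-term relation \eqref{ftr}.

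The main obstacle is the well-definedness of $a_n$: the argument requires the half-moment matrix $I(P_n)=\int_{\mathbb{R}_+}P_n(x)W(x)\,dx$ to be invertible. This is not automatic from the moment conditions as stated, since $I(P_n)$ equals the quasideterminant obtained from \eqref{qp} by replacing its last column with $(\int_{\mathbb{R}_+}x^iW(x)\,dx)_{i=0}^{n}$, and the nonvanishing of that quasideterminant is an independent regularity requirement. I would therefore either fold the invertibility of all $I(P_n)$ into the moment conditions, or verify it directly for the weights of interest (e.g.\ matrix Laguerre weights); once this is in hand the coefficients $a_n,b_n,c_n,d_n$ are uniquely determined and the remaining bookkeeping is routine.
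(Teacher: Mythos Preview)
Your proposal is correct and follows essentially the same route as the paper: expand $x(P_{n+1}+a_nP_n)$ in the basis $\{P_k\}$, use the Cauchy-kernel identity \eqref{iteration} to reduce the low-index coefficients to a single left factor $I(P_{n+1})+a_nI(P_n)$, and choose $a_n=-I(P_{n+1})I(P_n)^{-1}$ to kill it. The paper likewise flags (just after its proof) that the invertibility of $I(P_n)=V_n$ is an additional standing assumption, so your caveat about this regularity requirement is exactly in line with the authors' treatment.
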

\begin{proof}
From the formula \eqref{iteration}, we know that if we take
\begin{align}\label{an}
a_n=-\left(\int_{\mathbb{R}_+}P_{n+1}(x)W(x)dx\right)\left(\int_{\mathbb{R}_+}P_n(x)W(x)dx\right)^{-1},
\end{align}
then 
\begin{align*}
\langle x(P_{n+1}(x)+a_nP_n(x)), Q(y)\rangle=-\langle P_{n+1}(x)+a_nP_n(x),yQ(y)\rangle.
\end{align*}
This is a key ingredient to deriving the recurrence formula. 
Moreover, since $\{P_n(x)\}_{n\in\mathbb{N}}$ form a basis in the Hilbert space, we have the following expansion formula
\begin{align*}
x(P_{n+1}(x)+a_nP_n(x))=P_{n+2}(x)+\sum_{i=0}^{n+1}\alpha_{n,i}P_i(x),\end{align*}
where $ \alpha_{n,i}\in\mathbb{R}^{p\times p}$ are some coefficients. 
Therefore, by taking $a_n$ as in \eqref{an}, we have
\begin{align*}
\langle P_{n+2}(x)+\sum_{i=0}^{n+1}\alpha_{n,i}P_i(x),P_m^\top(y)\rangle&=\langle x(P_{n+1}(x)+a_nP_n(x)), P_m^\top(y)\rangle\\&=-\langle P_{n+1}(x)+a_nP_n(x),yP_m^\top(y)\rangle.
\end{align*}
According to the orthogonal relation, we know that when $m<n-1$, the right-hand side is definitely zero, which means that $\alpha_{n,0}=\cdots=\alpha_{n,n-2}=0$. 
Therefore, the four-term recurrence relation \eqref{ftr} is obtained.

Moreover, we could write down the recurrence coefficients with the help of the inner product. 
By Taking the inner product with $P_{n-1}(y)$, $P_n(y)$ and $P_{n+1}(y)$ respectively, we have
\begin{align*}
d_n&=-\left\langle P_{n+1}(x)+a_nP_n(x),yP_{n-1}^\top(y)\right\rangle H_{n-1}^{-1}=-a_nH_nH_{n-1}^{-1},\\
c_n&=-\left\langle P_{n+1}(x)+a_nP_n(x),yP_n^\top(y)\right\rangle H_n^{-1}=-\left(H_{n+1}+a_n\left\langle P_n(x),yP_n^\top(y)\right\rangle\right)H_n^{-1},\\
b_n&=\left\langle x(P_{n+1}(x)+a_nP_n(x)),P_{n+1}^\top(y)\right\rangle H_{n+1}^{-1}=\left\langle xP_{n+1}(x),P_{n+1}^\top(y)\right\rangle H_{n+1}^{-1}+a_n.
\end{align*}
\end{proof}

It should be emphasized that those coefficients could be written in terms of quasideterminants. 
It is obvious that
\begin{align}\label{vn}
V_n:=\int_{\mathbb{R}_+} P_n(x)W(x)dx=\left|
\begin{array}{cccc}
m_{0,0}&\cdots&m_{0,n-1}&\phi_0\\
\vdots&&\vdots&\vdots\\
m_{n-1,0}&\cdots&m_{n-1,n-1}&\phi_{n-1}\\
m_{n,0}&\cdots&m_{n,n-1}&\boxed{\phi_n}
\end{array}
\right|,\quad \phi_i=\int_{\mathbb{R}_+}x^iW(x)dx\in\mathbb{R}^{p\times p}
\end{align}
according to the quasi-determinant formula \eqref{qp} and the linearity.
To ensure that $V_n$ is well-defined, we assume that $\{\phi_i\}_{i\in\mathbb{N}}$ exist and are finite. Moreover, according to the  exact formula of $a_n$ in \eqref{an}, we should assume that $V_n$ is nonsingular to ensure that it has a proper inverse.

Moreover, if we expand $P_n(x)=x^n\mathbb{I}_p+\xi_{n,n-1}x^{n-1}+\cdots+\xi_{n,0}$, then we have
\begin{align*}
b_n=Z^\top_{n+1}H_{n+1}^{-1}+\xi_{n+1,n}+a_n,\quad c_n=-H_{n+1}H_n^{-1}-a_nZ_nH_n^{-1}-a_nH_n\xi_{n,n-1}^\top H_n^{-1}
\end{align*}
where 
\begin{align*}
Z_{n}=\langle P_n(x),y^{n+1}\mathbb{I}_p\rangle=\left|
\begin{array}{cccc}
m_{0,0}&\cdots&m_{0,n-1}&m_{0,n+1}\\
\vdots&&\vdots&\vdots\\
m_{n-1,0}&\cdots&m_{n-1,n-1}&m_{n-1,n+1}\\
m_{n,0}&\cdots&m_{n,n-1}&\boxed{m_{n,n+1}}
\end{array}
\right|.
\end{align*}
It is of interest to notice that $Z_n$ could be written in terms of $\xi_{n+1,n}$.
\begin{proposition}
It holds that
\begin{align*}
Z_n^\top=-\xi_{n+1,n}H_n.
\end{align*}
\end{proposition}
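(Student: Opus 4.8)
The plan is to bypass the explicit quasideterminant formula for $\xi_{n+1,n}$ and instead read the identity off a single orthogonality relation, exploiting the duality $Q_{n+1}(y)=P_{n+1}^\top(y)$ coming from the lemma. The starting point is that, by the orthogonality \eqref{or},
\begin{align*}
\langle P_n(x),Q_{n+1}(y)\rangle=H_n\delta_{n,n+1}=0,
\end{align*}
and since $Q_{n+1}(y)=P_{n+1}^\top(y)$ this says $\langle P_n(x),P_{n+1}^\top(y)\rangle=0$. So the whole proof hinges on pairing $P_n$ against the \emph{transpose} of the next polynomial rather than against $P_{n+1}$ itself.

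Next I would expand in monomials. Writing $P_{n+1}(y)=\sum_{j=0}^{n+1}\xi_{n+1,j}y^j$ with $\xi_{n+1,n+1}=\mathbb{I}_p$, transposition gives $P_{n+1}^\top(y)=\sum_{j=0}^{n+1}(y^j\mathbb{I}_p)\xi_{n+1,j}^\top$, the scalar powers $y^j$ being central. Applying the right-linearity of the inner product then yields
\begin{align*}
0=\langle P_n(x),P_{n+1}^\top(y)\rangle=\sum_{j=0}^{n+1}\langle P_n(x),y^j\mathbb{I}_p\rangle\,\xi_{n+1,j}^\top.
\end{align*}
The decisive simplification is that $P_n(x)$ is orthogonal to $y^j\mathbb{I}_p$ for every $j\le n-1$, so all terms with $j\le n-1$ drop out and only $j=n$ and $j=n+1$ survive.

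The two surviving terms are $\langle P_n(x),y^n\mathbb{I}_p\rangle=H_n$ and $\langle P_n(x),y^{n+1}\mathbb{I}_p\rangle=Z_n$: the first identification follows from $H_n=\langle P_n,P_n^\top\rangle$ together with orthogonality (equivalently from the quasideterminant form of $H_n$ in \eqref{qp}), and the second is simply the definition of $Z_n$. Hence
\begin{align*}
H_n\xi_{n+1,n}^\top+Z_n=0.
\end{align*}
Transposing and using $H_n=H_n^\top$ from the proposition gives $Z_n^\top=-\xi_{n+1,n}H_n$, as claimed.

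I do not expect a genuine obstacle; the argument is essentially bookkeeping, and the only points demanding care are choosing $Q_{n+1}=P_{n+1}^\top$ as the object to pair against, handling the right-linearity so that the coefficients $\xi_{n+1,j}^\top$ emerge on the correct side, and invoking $H_n=H_n^\top$ at the final transposition. As an independent check one could argue purely computationally: expand $Z_n^\top$ from the quasideterminant in the statement, use the block symmetry $m_{i,j}=m_{j,i}^\top$ (which makes the moment matrix $M_n=(m_{i,j})_{i,j=0}^{n-1}$ block-symmetric, so $M_n^{-1}$ is invariant under block transposition), and match the result against the explicit formula \eqref{xin} for $\xi_{n+1,n}$ multiplied by $H_n$; this is more laborious but yields the same relation.
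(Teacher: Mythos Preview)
Your proof is correct and genuinely different from the paper's. The paper proceeds computationally: it writes $\xi_{n+1,n}$ as the quasideterminant
\[
\xi_{n+1,n}=\left|\begin{array}{cccc}
m_{0,0}&\cdots&m_{0,n}&0\\
\vdots&&\vdots&\vdots\\
m_{n,0}&\cdots&m_{n,n}&\mathbb{I}_p\\
m_{n+1,0}&\cdots&m_{n+1,n}&\boxed{0}
\end{array}\right|
\]
and then applies the noncommutative Jacobi identity \eqref{ncj} to relate this to $Z_n$ and $H_n$. Your route, by contrast, is purely structural: you read the identity straight from the single orthogonality relation $\langle P_n,P_{n+1}^\top\rangle=0$, expanding the right slot and using $\langle P_n,y^j\mathbb{I}_p\rangle=0$ for $j\le n-1$, $\langle P_n,y^n\mathbb{I}_p\rangle=H_n$, $\langle P_n,y^{n+1}\mathbb{I}_p\rangle=Z_n$. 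Your argument is shorter and more conceptual, requiring no quasideterminant machinery; the paper's argument, on the other hand, fits its overall program of expressing and manipulating everything via quasideterminant identities and serves as a consistency check on those formulas. The alternative verification you sketch at the end---transposing the quasideterminant for $Z_n$ using the block symmetry $m_{i,j}=m_{j,i}^\top$ and matching against \eqref{xin}---is essentially what the Jacobi-identity computation in the paper carries out.
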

\begin{proof}
According to \eqref{xin}, it is known that
\begin{align*}
\xi_{n+1,n}=\left|\begin{array}{cccc}
m_{0,0}&\cdots&m_{0,n}&0\\
\vdots&&\vdots&\vdots\\
m_{n,0}&\cdots&m_{n,n}&\mathbb{I}_p\\
m_{n+1,0}&\cdots&m_{n+1,n}&\boxed{0}
\end{array}
\right|,
\end{align*}
then application of the noncommutative Jacobi identity \eqref{ncj} gives the desired formula.
\end{proof}
Therefore, coefficients in the four-term recurrence relation \eqref{ftr} could be expressed in terms of quasideterminants and we have
\begin{align}\label{coefficients}
\begin{aligned}
&a_n=-V_{n+1}V_n^{-1},\quad b_n=-\xi_{n+2,n+1}+\xi_{n+1,n}+a_n,\\
&c_n=-H_{n+1}H_n^{-1}+a_nH_n\left(\xi_{n+1,n}^\top-\xi_{n,n-1}^\top\right)H_n^{-1}, \quad d_n=-a_nH_nH_{n-1}^{-1}.
\end{aligned}
\end{align}

\section{Time evolutions and non-commutative C-Toda lattice}\label{sec3}

This part is devoted to the discussion of how to introduce time flows into the matrix-valued Cauchy bi-orthogonal polynomials. 
In fact, formula \eqref{ftr} indicates the following spectral problem
\begin{align*}
\left(\begin{array}{c}
P_{n+2}(x)\\
P_{n+1}(x)\\
P_n(x)
\end{array}
\right)=\left(\begin{array}{ccc}
x-b_n& a_nx-c_n&-d_n\\
\mathbb{I}_p&0&0\\
0&\mathbb{I}_p&0\end{array}
\right)\left(\begin{array}{c}
P_{n+1}(x)\\
P_{n}(x)\\
P_{n-1}(x)
\end{array}\right),
\end{align*}
and it is expected to find proper time flows such that an integrable system could be derived.

Let's assume that there is a family of odd time variables $\mt=(t_1,t_3,\cdots)$ and put them into the weight functions such that\footnote{Since the commutative case belongs to the CKP hierarchy, we constrain ourselves to the odd flows.}
\begin{align}\label{td}
W(x;\mt)=\exp\left(\sum_{i=1}^\infty t_{2i+1}x^{2i+1}\right)W(x).
\end{align}
In literature, those time variables could be possibly infinite many. 
For convenience, let's consider $t_1$-flow, which results in the first non-trivial example in the hierarchy. 
The role of $t_1$-flow is equivalent to a ladder operator in the orthogonal polynomial theory in some sense since $\p_{t_1}W(x;\mt)=xW(x;\mt)$.

Therefore, the inner product \eqref{innerproduct} becomes a time-deformed inner product
\begin{align*}
\langle f(x),g(y)\rangle=\int_{\mathbb{R}_+\times\mathbb{R}_+}\frac{f(x)W(x;\mt)W(y;\mt)g(y)}{x+y}dxdy,
\end{align*}
and matrix-valued Cauchy bi-orthogonal polynomials are time-dependent, denoted by $\{P_n(x;\mt)\}_{n\in\mathbb{N}}$. Since time variables are added in the weight function, it doesn't make any influence on the spectral part. 
Moreover, the moments $\{m_{i,j}\}_{i,j\in\mathbb{N}}$ satisfy the time evolutions
\begin{align*}
\p_{t_1}m_{i,j}=m_{i+1,j}+m_{i,j+1}=\phi_i\phi_j,
\end{align*}
where $\phi_i$ is given in \eqref{vn}. 
With this assumption, $\xi_{n,n-1}$ and $H_n$ have the following relation.
\begin{proposition}
It holds that
\begin{align}\label{H-xi}
\p_{t_1} H_n=\left(\xi_{n,n-1}-\xi_{n+1,n}\right)H_n+H_n\left(\xi_{n,n-1}^T-\xi_{n+1,n}^T\right).
\end{align}
\end{proposition}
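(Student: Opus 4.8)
The plan is to obtain \eqref{H-xi} by differentiating the closed form $H_n=\langle P_n(x),P_n^\top(y)\rangle$ directly, so the first task is to record how the $t_1$-flow acts on the (time-deformed) inner product. The crucial point is that, by \eqref{td}, $\p_{t_1}W(x;\mt)=xW(x;\mt)$, hence $\p_{t_1}\big(W(x;\mt)W(y;\mt)\big)=(x+y)W(x;\mt)W(y;\mt)$, and the factor $(x+y)$ cancels the Cauchy kernel. Differentiating under the integral sign (legitimate because the moment conditions guarantee all $m_{i,j}$ and $\phi_i$ are finite), I therefore expect the Leibniz rule
\begin{align*}
\p_{t_1}\langle f,g\rangle=\langle \p_{t_1}f,g\rangle+\langle f,\p_{t_1}g\rangle+\langle xf,g\rangle+\langle f,yg\rangle,
\end{align*}
where the last two terms are exactly the weight contribution $\int_{\mathbb{R}_+\times\mathbb{R}_+}f(x)(x+y)W(x;\mt)W(y;\mt)g(y)/(x+y)\,dxdy$ (equivalently, the right-hand side of \eqref{iteration}).

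Next I would apply this with $f=P_n(x)$ and $g=P_n^\top(y)$. Because each $P_n$ is monic of degree $n$, both $\p_{t_1}P_n$ and $\p_{t_1}P_n^\top$ have degree at most $n-1$. Combined with the orthogonality $\langle P_n(x),y^j\mathbb{I}_p\rangle=0$ for $0\le j\le n-1$ and its transpose $\langle x^j\mathbb{I}_p,P_n^\top(y)\rangle=\langle P_n(x),y^j\mathbb{I}_p\rangle^\top=0$ (obtained from the symmetry property of the inner product), the two genuine derivative terms $\langle \p_{t_1}P_n,P_n^\top\rangle$ and $\langle P_n,\p_{t_1}P_n^\top\rangle$ both vanish. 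This reduces the computation to
\begin{align*}
\p_{t_1}H_n=\langle xP_n(x),P_n^\top(y)\rangle+\langle P_n(x),yP_n^\top(y)\rangle.
\end{align*}

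It then remains to identify the two surviving brackets. I would expand $xP_n(x)=P_{n+1}(x)+(\xi_{n,n-1}-\xi_{n+1,n})x^n\mathbb{I}_p+(\text{terms of degree }\le n-1)$ by matching leading coefficients. Pairing against $P_n^\top$, the $P_{n+1}$ part is annihilated by orthogonality, the lower monomials pair to zero as above, and $\langle x^n\mathbb{I}_p,P_n^\top(y)\rangle=\langle P_n(x),y^n\mathbb{I}_p\rangle^\top=H_n^\top=H_n$, giving $\langle xP_n,P_n^\top\rangle=(\xi_{n,n-1}-\xi_{n+1,n})H_n$. For the second bracket I would invoke the symmetry property once more, which yields $\langle P_n,yP_n^\top\rangle=\langle xP_n,P_n^\top\rangle^\top=H_n(\xi_{n,n-1}^\top-\xi_{n+1,n}^\top)$. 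Summing the two expressions gives exactly \eqref{H-xi}, and as a consistency check the result is manifestly symmetric, in agreement with $H_n=H_n^\top$.

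The analytic content here is light; the main obstacle is purely algebraic bookkeeping in the noncommutative setting. One must keep every matrix coefficient on the correct side of the scalar powers, and apply the relation $\langle P(x),Q(y)\rangle^\top=\langle Q^\top(x),P^\top(y)\rangle$ with the two polynomial slots matched correctly, which is precisely where a careless transpose would spoil the identity. The only non-formal steps are verifying that the subleading monomials of $xP_n$ genuinely annihilate against $P_n^\top$ (which again reduces to $\langle P_n(x),y^j\mathbb{I}_p\rangle=0$ for $j<n$) and that $\langle x^n\mathbb{I}_p,P_n^\top(y)\rangle=H_n$ via the symmetry property together with $H_n=H_n^\top$.
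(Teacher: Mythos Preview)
Your argument is correct. It differs substantially from the paper's proof, which works entirely at the level of quasideterminants: the authors apply the derivative formula \eqref{dqd} to the explicit expression for $H_n$, use $\p_{t_1}m_{i,j}=m_{i+1,j}+m_{i,j+1}$, insert block identity matrices $\sum_j e_je_j^\top$, and reduce the resulting sums via quasideterminant row identities and the noncommutative Jacobi identity to recognise $Z_n+\xi_{n,n-1}H_n$ (together with its transpose), finally invoking $Z_n^\top=-\xi_{n+1,n}H_n$.

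Your route is the ``orthogonality'' proof: differentiate $H_n=\langle P_n,P_n^\top\rangle$, observe that the weight contribution collapses to $\langle xP_n,P_n^\top\rangle+\langle P_n,yP_n^\top\rangle$ because the Cauchy kernel cancels $(x+y)$, kill the genuine derivative terms by degree and orthogonality, and read off the coefficient of $x^n$ in $xP_n-P_{n+1}$. This is cleaner and more conceptual, and it mirrors exactly the mechanism the paper itself uses later (formula \eqref{d1}) to derive the evolution of $P_{n+1}+a_nP_n$. What the paper's quasideterminant computation buys is an explicit, self-contained verification in terms of moments alone---consonant with the ``direct method'' philosophy and with the later Lemmas establishing \eqref{H-V} and \eqref{nc-xi}---whereas your argument buys brevity and makes the role of the Cauchy kernel transparent. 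Both are valid; the symmetry bookkeeping you flag ($\langle P,Q\rangle^\top=\langle Q^\top,P^\top\rangle$ with $x$ scalar) is handled correctly.
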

\begin{proof}
With $\p_{t_1}m_{i,j}=m_{i+1,j}+m_{i,j+1}$ and applications of derivative formula \eqref{dqd}, we notice that
\begin{align*}
\p_{t_1}H_n&=\left|\begin{array}{cccc}
m_{0,0}&\cdots&m_{0,n-1}&m_{0,n+1}+m_{1,n}\\
\vdots&&\vdots&\vdots\\
m_{n-1,0}&\cdots&m_{n-1,n-1}&m_{n-1,n+1}+m_{n,n}\\
m_{n,0}&\cdots&m_{n,n-1}&\boxed{m_{n,n+1}}\end{array}
\right|+\text{(transposed part)}\\
&+mM^{-1}(A_1+A_2)M^{-1}m^\top
\end{align*}
where 
\begin{align*}
m=(m_{n,0},\cdots,m_{n,n-1}),\quad
M=\left(m_{i,j}
\right)_{i,j=0}^{n-1},\quad A_1=\left(m_{i+1,j}
\right)_{i,j=0}^{n-1},\quad A_2=\left(m_{i,j+1}
\right)_{i,j=0}^{n-1}.
\end{align*}
By inserting an identity matrix, we have
\begin{align*}
mM^{-1}\left(\sum_{j=1}^n e_je_j^\top\right)A_1M^{-1}m^\top=\sum_{j=1}^n \left|\begin{array}{cc}
M&e_j\\
m&\boxed{0}
\end{array}
\right|\cdot\left|\begin{array}{cccc}
m_{0,0}&\cdots&m_{0,n-1}&m_{0,n}\\
\vdots&&\vdots&\vdots\\
m_{n-1,0}&\cdots&m_{n-1,n-1}&m_{n-1,n}\\
m_{j,0}&\cdots&m_{j,n-1}&\boxed{0}
\end{array}
\right|.
\end{align*}
Noting that if two rows are identical, then the corresponding quasideterminant is zero \cite[Prop. 1.4.6]{gelfand05}, we know that the above formula is equal to
\begin{align*}
-\left|\begin{array}{cccc}
m_{0,0}&\cdots&m_{0,n-1}&m_{1,n}\\
\vdots&&\vdots&\vdots\\
m_{n-1,0}&\cdots&m_{n-1,n-1}&m_{n,n}\\
m_{n,0}&\cdots&m_{n,n-1}&\boxed{0}
\end{array}
\right|+\left|\begin{array}{cc}
M&e_n\\
m&\boxed{0}
\end{array}
\right|\cdot\left|\begin{array}{cc}
M&m^\top\\
m&\boxed{m_{n,n}}
\end{array}
\right|.
\end{align*}
Therefore, 
\begin{align*}
\left|\begin{array}{cccc}
m_{0,0}&\cdots&m_{0,n-1}&m_{0,n+1}+m_{1,n}\\
\vdots&&\vdots&\vdots\\
m_{n-1,0}&\cdots&m_{n-1,n-1}&m_{n-1,n+1}+m_{n,n}\\
m_{n,0}&\cdots&m_{n,n-1}&\boxed{m_{n,n+1}}\end{array}
\right|+mM^{-1}A_1M^{-1}m^\top=Z_n+\xi_{n,n-1}H_n.
\end{align*}
Similarly, the transposed part and $mM^{-1}A_2M^{-1}m^\top$ give the rest part, and the proof is complete.
\end{proof}

Moreover, we have the following derivative formula for the polynomials.
\begin{proposition}
Time-dependent Cauchy orthogonal polynomials $\{P_n(x;\mt)\}_{n\in\mathbb{N}}$ satisfy the evolution equation
\begin{align}\label{te}
\p_{t_1}\left(P_{n+1}(x;\mt)+a_nP_n(x;\mt)\right)=\p_{t_1}\left(\xi_{n+1,n}+a_n\right)P_n(x;\mt),
\end{align}
where $a_n$ is given in \eqref{coefficients}.
\end{proposition}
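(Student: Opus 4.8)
The plan is to expand $\p_{t_1}(P_{n+1}(x;\mt)+a_nP_n(x;\mt))$ in the basis $\{P_i(x;\mt)\}_{i\in\mathbb{N}}$ and to show that every coefficient below the top one vanishes. Write $F_n:=P_{n+1}+a_nP_n$. Since $P_{n+1}$ is monic of degree $n+1$ with \emph{constant} leading coefficient $\mathbb{I}_p$, the derivative $\p_{t_1}F_n$ is a matrix-valued polynomial of degree at most $n$, so there is a unique expansion $\p_{t_1}F_n=\sum_{i=0}^n\beta_{n,i}P_i$ with $\beta_{n,i}\in\mathbb{R}^{p\times p}$. Pairing against $P_m^\top$ and using $\langle P_i,P_m^\top\rangle=H_i\delta_{i,m}$ recovers $\beta_{n,m}=\langle\p_{t_1}F_n,P_m^\top(y)\rangle H_m^{-1}$, so it suffices to prove $\beta_{n,m}=0$ for $m\le n-1$ and to identify $\beta_{n,n}$.

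The central ingredient is a product rule for differentiating the time-deformed inner product. Because $\p_{t_1}W(x;\mt)=xW(x;\mt)$ gives $\p_{t_1}(W(x;\mt)W(y;\mt))=(x+y)W(x;\mt)W(y;\mt)$, the Cauchy kernel cancels against this factor exactly as in \eqref{iteration}, and one obtains
\begin{align*}
\p_{t_1}\langle f(x),g(y)\rangle=\langle\p_{t_1}f,g\rangle+\langle f,\p_{t_1}g\rangle+\left(\int_{\mathbb{R}_+}f(x)W(x)dx\right)\left(\int_{\mathbb{R}_+}W(y)g(y)dy\right).
\end{align*}
I would apply this with $f=F_n$ and $g=P_m^\top$ for $m\le n-1$. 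The left side $\p_{t_1}\langle F_n,P_m^\top\rangle$ vanishes, since $\langle F_n,P_m^\top\rangle=H_{n+1}\delta_{n+1,m}+a_nH_n\delta_{n,m}=0$ throughout this range. The inhomogeneous term on the right vanishes identically because $\int_{\mathbb{R}_+}F_n(x)W(x)dx=V_{n+1}+a_nV_n=0$ by the very definition $a_n=-V_{n+1}V_n^{-1}$ in \eqref{coefficients}; this is precisely the property for which $a_n$ was designed. For the remaining middle term, $\p_{t_1}(P_m^\top)$ is a polynomial in $y$ of degree at most $m-1\le n-2$, so both $\langle P_{n+1},\p_{t_1}P_m^\top\rangle$ and $\langle P_n,\p_{t_1}P_m^\top\rangle$ vanish by orthogonality of $P_{n+1}$ and $P_n$ against lower-degree polynomials. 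Hence $\langle\p_{t_1}F_n,P_m^\top\rangle=0$, i.e. $\beta_{n,m}=0$ for $m\le n-1$, leaving $\p_{t_1}F_n=\beta_{n,n}P_n$.

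It then remains to pin down $\beta_{n,n}$ by comparing the coefficient of $x^n$ on both sides. From $F_n=x^{n+1}\mathbb{I}_p+(\xi_{n+1,n}+a_n)x^n+\cdots$, the top surviving term of $\p_{t_1}F_n$ is $\p_{t_1}(\xi_{n+1,n}+a_n)x^n$, while the right side contributes $\beta_{n,n}x^n$; matching gives $\beta_{n,n}=\p_{t_1}(\xi_{n+1,n}+a_n)$, which is exactly \eqref{te}.

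The main obstacle is establishing the product rule above and, more subtly, recognizing that the resulting inhomogeneous term is annihilated \emph{exactly} by the recurrence coefficient $a_n$; once that cancellation is in place, the rest is a routine orthogonality and degree-counting argument. In writing it up I would take care over the bookkeeping---that $\p_{t_1}F_n$ really has degree at most $n$, that $\p_{t_1}P_m^\top$ has degree at most $m-1$---and over the standing invertibility hypotheses on $V_n$ and $H_m$ that were imposed when $a_n$ and the $H_n$ were introduced.
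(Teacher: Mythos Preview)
Your proof is correct and follows essentially the same approach as the paper's. The only cosmetic difference is that you state the differentiation rule for the inner product explicitly as a product rule with an inhomogeneous term $(\int fW)(\int Wg)$, whereas the paper packages the same computation by invoking the identity $\langle xF_n,P_m^\top\rangle+\langle F_n,yP_m^\top\rangle=0$ (equation~\eqref{a1}); both formulations express the cancellation forced by the choice of $a_n$, and the remaining orthogonality and degree-counting steps are identical.
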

\begin{proof}
From the orthogonal relation \eqref{or}, we have the following formula
\begin{align}
\langle P_{n+1}(x;\mt)+a_nP_n(x;\mt),P_m^\top(y;\mt)\rangle=H_{n+1}\delta_{n+1,m}+a_nH_n\delta_{n,m}.
\end{align}
By taking the $t_1$-derivative on both sides and noting that
\begin{align}\label{a1}
\langle x(P_{n+1}(x;\mt)+a_nP_n(x;\mt)),P_m^\top(y;\mt)\rangle+\langle P_{n+1}(x;\mt)+a_nP_n(x;\mt),yP_m^\top(y;\mt)\rangle=0.\end{align}
we obtain the formula
\begin{align}\label{d1}
\begin{aligned}
\p_{t_1}H_{n+1}\delta_{n+1,m}+\p_{t_1}(a_nH_n)\delta_{n,m}&=\langle \p_{t_1}(P_{n+1}(x;\mt)+a_nP_n(x;\mt)),P_m^\top(y;\mt)\rangle\\&+\langle P_{n+1}(x;\mt)+a_nP_n(x;\mt),\p_{t_1}P_m^\top(y;\mt)\rangle.
\end{aligned}
\end{align}
Therefore, when $m<n$, it is known that the left-hand side of the above formula is equal to zero. 
Moreover, we have $\text{deg}(\p_{t_1} P_m^\top)<\text{deg}P_m^\top$, and thus 
\begin{align}\label{de2}
\langle \p_{t_1}(P_{n+1}(x;\mt)+a_nP_n(x;\mt)),P_m^\top(y;\mt)\rangle=0,\quad m<n.
\end{align}
On the other hand, since $\{P_n(x;\mt)\}_{n\in\mathbb{N}}$ span a polynomial basis in the Hilbert space, we have
\begin{align*}
\p_{t_1}(P_{n+1}(x;\mt)+a_nP_{n}(x;\mt))=\sum_{k=0}^n \gamma_{n,k}P_k(x;\mt).
\end{align*}
By taking it into \eqref{de2}, we could get $\gamma_{n,0}=\cdots=\gamma_{n,n-1}=0$. 
By comparing the coefficients of $x^n$ on both sides, we get that $\gamma_{n,n}=\p_{t_1}(\xi_{n+1,n}+a_n).$
\end{proof}
\begin{remark}
The reason why we consider $\p_{t_1}(P_{n+1}(x;\mt)+a_nP_n(x;\mt))$ is due to the equation \eqref{a1}.
\end{remark}
Below, we derive an integrable lattice from \eqref{d1}, which is equivalent to the compatibility condition of spectral problem \eqref{ftr} and time evolution \eqref{te}. From \eqref{d1}, it is known that when $m=n$, we have
\begin{align*}
\p_{t_1}(a_nH_n)=\langle\gamma_{n,n} P_n(x;\mt),P_n^\top(y;\mt)\rangle=\p_{t_1}(\xi_{n+1,n}+a_n)H_n,
\end{align*}
and when $m=n+1$, we have
\begin{align*}
\p_{t_1}H_{n+1}=\langle P_{n+1}(x;\mt)+a_nP_n(x;\mt), \p_{t_1}P_{n+1}^\top(y;\mt)\rangle=a_nH_n\p_{t_1}\xi_{n+1,n}^\top.
\end{align*}
To conclude, we have the following proposition.
\begin{theorem}
The non-commutative C-Toda lattice is
\begin{align}\label{nc}
\begin{aligned}
\left\{\begin{array}{l}
a_n\cdot \p_{t_1}H_n=\p_{t_1}\xi_{n+1,n}\cdot H_n,\\
\p_{t_1}H_{n+1}=a_n\cdot H_n\cdot \p_{t_1}\xi_{n+1,n}^\top,\\
\p_{t_1} H_n=\left(\xi_{n,n-1}-\xi_{n+1,n}\right)H_n+H_n\left(\xi_{n,n-1}^\top-\xi_{n+1,n}^\top\right).
\end{array}
\right.
\end{aligned}
\end{align}
with Lax pair
\begin{align*}
&x\left(
P_{n+1}(x;\mt)+a_nP_n(x;\mt)
\right)=P_{n+2}(x;\mt)+\left(-\xi_{n+2,n+1}+\xi_{n+1,n}+a_n\right)P_{n+1}(x;\mt)\\
&\qquad-\left(H_{n+1}H_n^{-1}+a_nH_n\left(-\xi_{n+1,n}^\top+\xi_{n,n-1}^\top\right)H_n^{-1}\right)P_n(x;\mt)-a_nH_nH_{n-1}^{-1}P_{n-1}(x;\mt),\\
&\p_{t_1}(P_{n+1}(x;\mt)+a_nP_n(x;\mt))=\p_{t_1}(\xi_{n+1,n}+a_n)P_n(x;\mt).
\end{align*}
\end{theorem}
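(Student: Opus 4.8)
The plan is to assemble the three equations of \eqref{nc} from the relations already established and then to certify that the displayed pair really is a Lax pair by checking compatibility. The third equation of \eqref{nc} is precisely the identity \eqref{H-xi}, so nothing new is required there. For the first two equations I would start from the differentiated orthogonality relation \eqref{d1}, obtained from \eqref{or} using \eqref{a1}, and specialise the free index. Putting $m=n$ in \eqref{d1} and inserting the evolution \eqref{te} gives $\p_{t_1}(a_nH_n)=\p_{t_1}(\xi_{n+1,n}+a_n)H_n$; expanding both sides by the Leibniz rule and cancelling the common term $\p_{t_1}(a_n)\,H_n$ leaves exactly $a_n\,\p_{t_1}H_n=\p_{t_1}\xi_{n+1,n}\,H_n$. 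Putting $m=n+1$ in \eqref{d1} and noting that $\p_{t_1}P_{n+1}^\top(y)=y^n\mathbb{I}_p\,\p_{t_1}\xi_{n+1,n}^\top+(\text{lower order})$ reduces the right-hand side, by right-linearity of the inner product together with $\langle P_{n+1}+a_nP_n,y^n\mathbb{I}_p\rangle=a_nH_n$, to $a_nH_n\,\p_{t_1}\xi_{n+1,n}^\top$, which is the second equation. The only subtlety here is to keep every matrix factor on its correct side, since the inner product \eqref{innerproduct} is only one-sidedly linear in each slot.

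For the Lax pair, the spectral half is the four-term recurrence \eqref{ftr} with its coefficients made explicit through \eqref{coefficients}, and the temporal half is \eqref{te} itself. To confirm that \eqref{nc} is genuinely the compatibility condition of this pair I would cross-differentiate: apply $\p_{t_1}$ to \eqref{ftr}, multiply \eqref{te} by $x$, and use \eqref{ftr} once more to rewrite each $xP_k$ that appears as a combination of $P_{k+1},\dots,P_{k-1}$. Because $\{P_k(x;\mt)\}$ is a basis of $\mathbb{R}^{p\times p}[x]$, matching the coefficient of each $P_k$ in the two expansions produces a finite list of matrix identities in $a_n,b_n,c_n,d_n,H_n,\xi_{n+1,n}$; substituting \eqref{coefficients}, in particular $a_n=-V_{n+1}V_n^{-1}$ and $d_n=-a_nH_nH_{n-1}^{-1}$, I expect this list to collapse to exactly the three equations of \eqref{nc}. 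Equivalently, one may recast \eqref{ftr} and \eqref{te} as a discrete zero-curvature pair $L_n,M_n$ and verify $\p_{t_1}L_n=M_{n+1}L_n-L_nM_n$.

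In the spirit of the direct method advertised in the introduction, the same three equations can also be checked intrinsically on the quasideterminant representatives \eqref{qp}, \eqref{vn} and the formula for $\xi_{n+1,n}$: differentiate each quasideterminant by the derivative formula \eqref{dqd} using $\p_{t_1}m_{i,j}=m_{i+1,j}+m_{i,j+1}=\phi_i\phi_j$, and reduce the resulting bordered quasideterminants with the noncommutative Jacobi identity \eqref{ncj}, exactly as in the derivation of \eqref{H-xi}. I expect the main obstacle to lie entirely in the noncommutativity: each Jacobi reduction outputs products of quasideterminants in a rigid left-to-right order, so the cancellations that are automatic in the scalar C-Toda case must be arranged so that $H_n$, $a_n$ and the various $\xi$'s meet on the correct side. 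Keeping track of this one-sided bookkeeping—especially verifying that the transposed block in \eqref{H-xi} pairs correctly with $\p_{t_1}\xi_{n+1,n}^\top$ in the second equation—is where the real care is needed; once the one-sided structure is respected, the identities follow formally as in the commutative theory.
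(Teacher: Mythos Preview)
Your proposal is correct and follows essentially the same route as the paper: the paper also obtains the first two equations of \eqref{nc} by specialising the differentiated orthogonality relation \eqref{d1} to $m=n$ and $m=n+1$ and inserting \eqref{te}, takes the third equation directly from \eqref{H-xi}, and identifies the Lax pair as \eqref{ftr} with the coefficients \eqref{coefficients} together with \eqref{te}. Your additional remarks on checking compatibility by cross-differentiation and on the direct quasideterminant verification also mirror what the paper does (the latter being carried out immediately after the theorem via the identities $\p_{t_1}H_n=V_nV_n^\top$ and $\p_{t_1}\xi_{n+1,n}=-V_{n+1}V_n^\top H_n^{-1}$).
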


It should be noticed that there is also a matrix Lax pair. If we denote
\begin{align*}
\Phi=\left(\begin{array}{c}
P_0(x)\\
P_1(x)\\
\vdots\end{array}
\right),\quad A=\left(\begin{array}{cccc}
a_0&\mathbb{I}_p&&\\
&a_1&\mathbb{I}_p&\\
&&\ddots&\ddots
\end{array}
\right),\quad B=\left(\begin{array}{ccccc}
c_0&b_0&\mathbb{I}_p&&\\
d_1&c_1&b_1&\mathbb{I}_p&\\
&\ddots&\ddots&\ddots&\ddots\end{array}
\right),
\end{align*}
and $C=\text{diag}(e_0,e_1,\cdots)$ where $e_i=\p_{t_1}\xi_{i+1,i}$,
then the four-term recurrence relation and the derivative formula is equivalent to
\begin{align*}
Ax\Phi=B\Phi,\quad A\p_t\Phi=C\Phi,
\end{align*}
and the compatibility condition gives
\begin{align}\label{lax}
\p_{t_1}\mathcal{L}=[\mathcal{N},\mathcal{L}]=\mathcal{N}\mathcal{L}-\mathcal{L}\mathcal{N},
\end{align}
where $\mathcal{L}=A^{-1}B$ and $\mathcal{N}=A^{-1}C$.

\begin{remark}
In the commutative case, if we denote $\tau$-function $\tau_n$ and auxiliary function $\sigma_n$ as
\begin{align*}
\tau_n=\det\left(\begin{array}{ccc}
m_{0,0}&\cdots&m_{0,n-1}\\
\vdots&&\vdots\\
m_{n-1,0}&\cdots&m_{n-1,n-1}
\end{array}
\right),\quad \sigma_n=\det\left(\begin{array}{cccc}
m_{0,0}&\cdots&m_{0,n-1}&\phi_0\\
\vdots&&\vdots&\vdots\\
m_{n-1,0}&\cdots&m_{n-1,n-1}&\phi_{n-1}\\
m_{n,0}&\cdots&m_{n,n-1}&\phi_n
\end{array}\right),
\end{align*}
then we have
\begin{align*}
H_n=\frac{\tau_{n+1}}{\tau_n},\quad a_n=\frac{\tau_{n}\sigma_{n+1}}{\sigma_n\tau_{n+1}},\quad \xi_{n+1,n}=-\frac{1}{2}\frac{\p_{t_1}\tau_{n+1}}{\tau_{n+1}},
\end{align*}
and the non-commutative equation \eqref{nc} is equivalent to\footnote{$D_t$ is the Hirota's bilinear derivative operator defined by $D_t f(t)\cdot g(t)=\frac{\p}{\p s}f(t+s)g(t-s)|_{s=0}$.}
\begin{align}\label{c-toda}
D_t\tau_{n+1}\cdot\tau_n=\sigma_n^2,\quad D_t^2\tau_{n+1}\cdot\tau_{n+1}=4\sigma_{n+1}\sigma_n,
\end{align}
which is the so-called C-Toda lattice whose $\tau$-function acts as that of CKP hierarchy \cite{li19}.
\end{remark}

Below, we verify noncommutative variables in \eqref{nc} admit quasi-determinant forms. 
For this purpose, we observe that the first two equations in \eqref{nc} could be equivalently written as
\begin{align*}
\p_{t_1} H_{n+1}=a_n\cdot\p_{t_1} H_n\cdot a_n^\top,\quad \p_{t_1}\xi_{n+1,n}=a_n\cdot\p_{t_1}H_n\cdot H_n^{-1},
\end{align*}
and we need following lemmas.
\begin{lemma}
It holds that
\begin{align}\label{H-V}
\p_{t_1}H_n=V_nV_n^\top.
\end{align}
\end{lemma}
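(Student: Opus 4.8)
The plan is to prove \eqref{H-V} by differentiating the normalization directly in its inner-product form, rather than by manipulating the quasideterminant. Since $Q_n^\top=P_n$, the orthogonality \eqref{or} reads $H_n=\langle P_n(x;\mt),P_n^\top(y;\mt)\rangle$, i.e.
\begin{align*}
H_n=\int_{\mathbb{R}_+\times\mathbb{R}_+}\frac{P_n(x;\mt)W(x;\mt)W(y;\mt)P_n^\top(y;\mt)}{x+y}\,dxdy.
\end{align*}
First I would apply $\p_{t_1}$ under the integral sign via the product rule, splitting the result into four pieces according to whether the derivative hits $P_n(x)$, $W(x;\mt)$, $W(y;\mt)$, or $P_n^\top(y)$, and then use $\p_{t_1}W(x;\mt)=xW(x;\mt)$ (and the same in $y$) to rewrite the two ``weight'' pieces as $\langle xP_n,P_n^\top\rangle$ and $\langle P_n,yP_n^\top\rangle$ in the time-deformed inner product.

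The next step is to discard the two ``polynomial'' pieces. Since $P_n$ is monic with constant leading coefficient $\mathbb{I}_p$, we have $\deg(\p_{t_1}P_n)\le n-1$, so by right-linearity and the equivalent orthogonality $\langle P_n(x),y^j\mathbb{I}_p\rangle=0$ for $j\le n-1$ one gets $\langle P_n,\p_{t_1}P_n^\top\rangle=0$. For the other term I would invoke property (2) of the inner product, $\langle P(x),Q(y)\rangle^\top=\langle Q^\top(x),P^\top(y)\rangle$, with $P=\p_{t_1}P_n$ and $Q=P_n^\top$, which reduces $\langle\p_{t_1}P_n,P_n^\top\rangle^\top$ to the term just shown to vanish; hence $\langle\p_{t_1}P_n,P_n^\top\rangle=0$ as well. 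This leaves
\begin{align*}
\p_{t_1}H_n=\langle xP_n(x;\mt),P_n^\top(y;\mt)\rangle+\langle P_n(x;\mt),yP_n^\top(y;\mt)\rangle.
\end{align*}

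Finally I would apply the Cauchy-kernel identity \eqref{iteration} with $f=P_n$ and $g=P_n^\top$, which says precisely that the sum of these two brackets collapses to the product of single integrals, namely $\big(\int_{\mathbb{R}_+}P_n(x)W(x)dx\big)\big(\int_{\mathbb{R}_+}W(y)P_n^\top(y)dy\big)$. The left factor is $V_n$ by \eqref{vn}; for the right factor, taking a transpose and using $W=W^\top$ gives $\big(\int_{\mathbb{R}_+}W(y)P_n^\top(y)dy\big)^\top=\int_{\mathbb{R}_+}P_n(y)W(y)dy=V_n$, so the right factor equals $V_n^\top$. This yields $\p_{t_1}H_n=V_nV_n^\top$.

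The main obstacle is not any single computation but the careful bookkeeping of transposes in the non-commutative setting: one must be sure that the two polynomial-derivative brackets genuinely vanish (the symmetric use of property (2) is what makes the second one disappear) and that the symmetry $W=W^\top$ is correctly converting the second surviving integral into $V_n^\top$ rather than $V_n$. As an alternative, one could instead expand $H_n=m_{n,n}-CM^{-1}B$ from \eqref{qp}, differentiate using the quasideterminant derivative formula \eqref{dqd} together with $\p_{t_1}m_{i,j}=\phi_i\phi_j$, and factor the outcome as $(\phi_n-CM^{-1}\boldsymbol\phi)(\phi_n-\boldsymbol\phi^{\!\top}M^{-1}B)$; identifying the two factors with $V_n$ and $V_n^\top$ via the moment symmetry $m_{i,j}=m_{j,i}^\top$ gives the same result, but the inner-product route is cleaner and avoids the block-transpose algebra.
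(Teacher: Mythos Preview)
Your proof is correct, but it takes a genuinely different route from the paper. The paper proves \eqref{H-V} entirely within the quasideterminant calculus: it applies the Grammian-type derivative formula \eqref{dqd-1} to the quasideterminant expression \eqref{qp} for $H_n$, using $\p_{t_1}m_{i,j}=\phi_i\phi_j$, and then identifies the four resulting pieces with $V_n-\phi_n$, $V_n^\top-\phi_n$, $V_n\phi_n-\p_{t_1}m_{n,n}$ and its transpose, so that upon adding $\p_{t_1}m_{n,n}=\phi_n\phi_n$ everything recombines into $V_nV_n^\top$. You instead differentiate $H_n=\langle P_n,P_n^\top\rangle$ in its integral form, eliminate the two $\p_{t_1}P_n$--terms via orthogonality and the symmetry property (2), and use the Cauchy identity \eqref{iteration} to collapse the two weight-derivative terms directly into the product of single integrals. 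Your argument is cleaner and more conceptual, making no use of quasideterminant identities at all; the paper's approach has the virtue of staying inside the quasideterminant framework that is used uniformly in Section~\ref{sec3} for the direct verification, and of not relying on differentiation under the integral or on the inner-product symmetry. Amusingly, the alternative you sketch in your final paragraph is essentially the paper's proof.
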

\begin{proof}
In light of the condition on the derivative of moments as follows
\begin{align*}
\p_{t_1}m_{i,j}=\phi_i\phi_j, \quad i, j=0, 1, \dots, n,
\end{align*}
we can do computations on the derivative of quasideterminant following \eqref{dqd-1} and
\begin{align*}
\p_{t_1}H_n=&\p_{t_1}m_{n,n}
+\left|\begin{array}{cccc}
m_{0,0}&\cdots&m_{0,n-1}&\phi_0\\
\vdots&&\vdots&\vdots\\
m_{n-1,0}&\cdots&m_{n-1,n-1}&\phi_{n-1}\\
m_{n,0}&\cdots&m_{n,n-1}&\boxed{0}\end{array}
\right|~\left|\begin{array}{cccc}
m_{0,0}&\cdots&m_{0,n-1}&m_{0,n}\\
\vdots&&\vdots&\vdots\\
m_{n-1,0}&\cdots&m_{n-1,n-1}&m_{n-1,n}\\
\phi_0&\cdots&\phi_{n-1}&\boxed{0}\end{array}
\right|\nonumber\\
&+\left|\begin{array}{cccc}
m_{0,0}&\cdots&m_{0,n-1}&m_{0,n}\\
\vdots&&\vdots&\vdots\\
m_{n-1,0}&\cdots&m_{n-1,n-1}&m_{n-1,n}\\
\p_{t_1}m_{n,0}&\cdots&\p_{t_1}m_{n,n-1}&\boxed{0}\end{array}
\right|
+\left|\begin{array}{cccc}
m_{0,0}&\cdots&m_{0,n-1}&\p_{t_1}m_{0,n}\\
\vdots&&\vdots&\vdots\\
m_{n-1,0}&\cdots&m_{n-1,n-1}&\p_{t_1}m_{n-1,n}\\
m_{n,0}&\cdots&m_{n,n-1}&\boxed{0}\end{array}
\right|.
\end{align*}
Then together with the following quasi-determinant relations
\begin{align*}
V_n-\phi_n=
\left|
\begin{array}{cccc}
m_{0,0}&\cdots&m_{0,n-1}&\phi_0\\
\vdots&&\vdots&\vdots\\
m_{n-1,0}&\cdots&m_{n-1,n-1}&\phi_{n-1}\\
m_{n,0}&\cdots&m_{n,n-1}&\boxed{0}
\end{array}
\right|
,\quad
V_n^\top-\phi_n=
\left|
\begin{array}{cccc}
m_{0,0}&\cdots&m_{0,n-1}&m_{0,n}\\
\vdots&&\vdots&\vdots\\
m_{n-1,0}&\cdots&m_{n-1,n-1}&m_{n-1,n}\\
\phi_0&\cdots&\phi_{n-1}&\boxed{0}
\end{array}
\right|.
\end{align*}
Moreover, by noting that
\begin{align*}
V_n\phi_n-\p_{t_1}m_{n,n}=
\left|\begin{array}{cccc}
m_{0,0}&\cdots&m_{0,n-1}&\p_{t_1}m_{0,n}\\
\vdots&&\vdots&\vdots\\
m_{n-1,0}&\cdots&m_{n-1,n-1}&\p_{t_1}m_{n-1,n}\\
m_{n,0}&\cdots&m_{n,n-1}&\boxed{0}\end{array}
\right|
\end{align*}
and its transposed version, as well as that $\p_{t_1}m_{n,n}=\phi_n\phi_n$, we know the result is true.
\end{proof}
\begin{lemma}
Regarding with the derivative of $\xi_{n+1,n}$, it holds that
\begin{align}\label{nc-xi}
\p_{t_1}\xi_{n+1,n}=-V_{n+1}V_n^\top H_n^{-1}.
\end{align}
\end{lemma}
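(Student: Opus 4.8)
The plan is to prove \eqref{nc-xi} by pure substitution, since all the required ingredients have already been assembled. Just above the lemmas, the first equation of the system \eqref{nc} was rewritten in the equivalent form $\p_{t_1}\xi_{n+1,n}=a_n\cdot\p_{t_1}H_n\cdot H_n^{-1}$, so the task reduces to inserting the known closed forms of $a_n$ and of $\p_{t_1}H_n$ into the right-hand side and simplifying.

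First I would recall from \eqref{coefficients} that $a_n=-V_{n+1}V_n^{-1}$, where $V_n$ is the quasideterminant introduced in \eqref{vn} and is assumed nonsingular. Next I would invoke \eqref{H-V} from the preceding lemma, namely $\p_{t_1}H_n=V_nV_n^\top$. Substituting both into $\p_{t_1}\xi_{n+1,n}=a_n\cdot\p_{t_1}H_n\cdot H_n^{-1}$ yields
\begin{align*}
\p_{t_1}\xi_{n+1,n}=\left(-V_{n+1}V_n^{-1}\right)\left(V_nV_n^\top\right)H_n^{-1}=-V_{n+1}V_n^\top H_n^{-1},
\end{align*}
where the middle simplification uses only $V_n^{-1}V_n=\mathbb{I}_p$. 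This is exactly the asserted identity \eqref{nc-xi}.

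Since the derivation is a one-line cancellation, there is no genuine obstacle; the only point demanding care is that $V_n$ be invertible, which was already imposed when $a_n$ was introduced through \eqref{an}. Alternatively, one could bypass the relation $\p_{t_1}\xi_{n+1,n}=a_n\cdot\p_{t_1}H_n\cdot H_n^{-1}$ and instead differentiate the quasideterminant expression \eqref{xin} for $\xi_{n+1,n}$ directly, using $\p_{t_1}m_{i,j}=\phi_i\phi_j$ together with the quasideterminant derivative formula, exactly as in the proof of the lemma establishing \eqref{H-V}; that route reproduces the same factor $-V_{n+1}V_n^\top H_n^{-1}$ but at the cost of longer bookkeeping, so the substitution argument is the efficient choice.
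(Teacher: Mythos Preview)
Your argument is logically valid: the first equation of \eqref{nc} was indeed established from orthogonality before the Theorem (via the computation at $m=n$), so rewriting it as $\p_{t_1}\xi_{n+1,n}=a_n\,\p_{t_1}H_n\,H_n^{-1}$ and inserting $a_n=-V_{n+1}V_n^{-1}$ together with \eqref{H-V} yields \eqref{nc-xi} in one line.

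However, this is a genuinely different route from the paper's, and the difference matters for the role the lemma plays. The paper proves \eqref{nc-xi} by differentiating the quasideterminant expression for $\xi_{n+1,n}$ directly via the formula \eqref{dqd-1}, then simplifying with the homological relation \eqref{homo}; no appeal to \eqref{nc} is made. The reason is that the two lemmas are introduced precisely in order to give a \emph{direct-method} (quasideterminant-identity) verification of \eqref{nc}, independent of the orthogonality argument. Your derivation of \eqref{nc-xi} feeds the first equation of \eqref{nc} back in, so when one afterwards ``verifies'' that equation by substituting \eqref{H-V} and \eqref{nc-xi}, the check collapses to a tautology. The second equation of \eqref{nc} would still be verified nontrivially, but the independent confirmation of the first is lost. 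In short: your proof establishes the lemma as stated, and is the efficient choice if the lemma is an end in itself; the paper's longer quasideterminant computation is what keeps the subsequent ``direct verification'' of \eqref{nc} non-circular. You already anticipated this alternative in your final paragraph --- that is exactly the route the paper takes.
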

\begin{proof}
Following the derivative formula \eqref{dqd-1}, we obtain 
\begin{align*}
\!\!\p_{t_1}\xi_{n+1,n}\!\!=\!\!\left|\begin{array}{cccc}
\!\!m_{0,0}&\cdots&\!\!\!\!m_{0,n}&\!\!\!\!0\\
\vdots&&\vdots&\vdots\\
\!\!m_{n,0}&\cdots&\!\!\!\!m_{n,n}&\!\!\!\!\mathbb{I}_p\\
\!\!\p_{t_1}m_{n+1,0}&\!\!\!\!\cdots&\p_{t_1}m_{n+1,n}&\!\!\!\!\boxed{0}\end{array}\right|
+\left|\begin{array}{cccc}
\!\!m_{0,0}&\cdots&\!\!\!\!m_{0,n}&\!\!\!\!\phi_0\\
\vdots&&\vdots&\vdots\\
\!\!m_{n,0}&\cdots&\!\!\!\!m_{n,n}&\!\!\!\!\phi_n\\
\!\!m_{n+1,0}&\cdots&\!\!\!\!m_{n+1,n}&\!\!\!\!\boxed{0}\end{array}\right|
\left|\begin{array}{cccc}
\!\!m_{0,0}&\cdots&\!\!\!\!m_{0,n}&\!\!\!\!0\\
\vdots&&\vdots&\vdots\\
\!\!m_{n,0}&\cdots&\!\!\!\!m_{n,n}&\!\!\!\!\mathbb{I}_p\\
\!\!\phi_0&\cdots&\!\!\!\!\phi_n&\!\!\!\!\boxed{0}\end{array}\right|.
\end{align*}
To simplify the above equation, we need to rewrite the first term as 
\begin{align*}
\left|\begin{array}{cccc}
\!\!m_{0,0}&\cdots&\!\!\!\!m_{0,n}&\!\!\!\!0\\
\vdots&&\vdots&\vdots\\
\!\!m_{n,0}&\cdots&\!\!\!\!m_{n,n}&\!\!\!\!\mathbb{I}_p\\
\!\!\p_{t_1}m_{n+1,0}&\!\!\!\!\cdots&\p_{t_1}m_{n+1,n}&\!\!\!\!\boxed{0}\end{array}\right|
=\phi_{n+1}\left|\begin{array}{cccc}
\!\!m_{0,0}&\cdots&\!\!\!\!m_{0,n}&\!\!\!\!0\\
\vdots&&\vdots&\vdots\\
\!\!m_{n,0}&\cdots&\!\!\!\!m_{n,n}&\!\!\!\!\mathbb{I}_p\\
\!\!\phi_{0}&\!\!\!\!\cdots&\phi_{n}&\!\!\!\!\boxed{0}\end{array}\right|
\end{align*}
and the first part in second term as $V_{n+1}-\phi_{n+1}$. Therefore we have the formula
\begin{align*}
\!\!\p_{t_1}\xi_{n+1,n}\!\!=V_{n+1}\left|\begin{array}{ccccc}
\!\!m_{0,0}&\cdots&\!\!\!\!m_{0,n-1}&\!\!\!\!m_{0,n}&\!\!\!\!0\\
\vdots&&\vdots&\vdots&\vdots\\
\!\!m_{n-1,0}&\cdots&\!\!\!\!m_{n-1,n-1}&\!\!\!\!m_{n-1,n}&\!\!\!\!0\\
\!\!m_{n,0}&\cdots&\!\!\!\!m_{n,n-1}&\!\!\!\!m_{n,n}&\!\!\!\!\mathbb{I}_p\\
\!\!\phi_0&\cdots&\!\!\!\!\phi_{n-1}&\!\!\!\!\phi_n&\!\!\!\!\boxed{0}\end{array}\right|.
\end{align*}
According to the homological relation of quasideterminant \eqref{homo}, we arrive at 
\begin{align*}
&\left|\begin{array}{ccccc}
\!\!m_{0,0}&\cdots&\!\!\!\!m_{0,n-1}&\!\!\!\!m_{0,n}&\!\!\!\!0\\
\vdots&&\vdots&\vdots&\vdots\\
\!\!m_{n-1,0}&\cdots&\!\!\!\!m_{n-1,n-1}&\!\!\!\!m_{n-1,n}&\!\!\!\!0\\
\!\!m_{n,0}&\cdots&\!\!\!\!m_{n,n-1}&\!\!\!\!m_{n,n}&\!\!\!\!\mathbb{I}_p\\
\!\!\phi_0&\cdots&\!\!\!\!\phi_{n-1}&\!\!\!\!\phi_n&\!\!\!\!\boxed{0}\end{array}\right|\\
\!\!=\!\!&
-\left|\begin{array}{cccc}
\!\!m_{0,0}    &\cdots&\!\!\!\!m_{0,n-1}   &\!\!\!\!m_{0,n}\\
\vdots           &          &\vdots.                &\vdots\\
\!\!m_{n-1,0} &\cdots&\!\!\!\!m_{n-1,n-1}&\!\!\!\!m_{n-1,n}\\
\!\!\phi_0       &\cdots&\!\!\!\!\phi_{n-1}   &\!\!\!\!\boxed{\phi_n}
\end{array}\right|
\left|\begin{array}{cccc}
m_{0,0}&\cdots&m_{0,n-1}&m_{0,n}\\
\vdots&&\vdots&\vdots\\
m_{n-1,0}&\cdots&m_{n-1,n-1}&m_{n-1,n}\\
m_{n,0}&\cdots&m_{n,n-1}&\boxed{m_{n,n}}\end{array}
\right|^{-1}\\
=&-V_n^\top {H_n}^{-1},
\end{align*}
which means the equation \eqref{nc-xi} is satisfied.
\end{proof}
Therefore, by substituting equations \eqref{H-V} and \eqref{nc-xi} into \eqref{nc}, the noncommutative C-Toda lattice could be directly verified. 

At the end of this section, we give a discussion on the $t_3$-flow and higher order flows could be similarly computed. 
By noting that
\begin{align*}
\p_{t_3}\langle f(x;\mt),g(y;\mt)\rangle&=\langle \p_{t_3}f(x;\mt),g(y;\mt)\rangle
+\langle f(x;\mt),\p_{t_3}g(y;\mt)\rangle\\
&\quad+\langle x^3f(x;\mt),g(y;\mt)\rangle+\langle f(x;\mt),y^3g(y;\mt)\rangle,
\end{align*}
where the last two terms could be alternatively written as
\begin{align*}
\int_{\mathbb{R}_+}x^2f(x;\mt)W(x;\mt)dx \int_{\mathbb{R}_+}W(y;\mt)g(y;\mt)dy&-\int_{\mathbb{R}_+}xf(x;\mt)W(x;\mt)dx \int_{\mathbb{R}_+}yW(y;\mt)g(y;\mt)dy\\&+\int_{\mathbb{R}_+} f(x;\mt)W(x;\mt)\int_{\mathbb{R}_+}y^2W(y;\mt)g(y;\mt)dy.
\end{align*}
If we choose $f(x;\mt)$ as $P_{n+1}(x;\mt)+a_nP_n(x;\mt)$, then we could take $g(y;\mt)$ as a proper linear combination for $\{P_m^\top(y;\mt)\}_{m\in\mathbb{N}}$ such that the summand is zero. We have the following proposition.

\begin{proposition}
Denote $U_m^\top=\int_{\mathbb{R}_+}yW(y;\mt)P_m^\top(y;\mt)dy$, and if $\eta_m$ and $\zeta_m$ are solutions of equations
\begin{align*}
&V_m^\top+V_{m-1}^\top \eta_m+V_{m-2}^\top \zeta_m=0,\\
&U_m^\top+U_{m-1}^\top \eta_m+U_{m-2}^\top \zeta_m=0,
\end{align*}
then 
$
g(y;\mt)=P_m^\top(y;\mt)+P_{m-1}^\top(y;\mt)\eta_m+P_{m-2}^\top(y;\mt)\zeta_m
$
satisfies
\begin{align*}
\int_{\mathbb{R}_+}W(y;\mt)g(y;\mt)dy=\int_{\mathbb{R}_+}yW(y;\mt)g(y;\mt)dy=0.
\end{align*}
\end{proposition}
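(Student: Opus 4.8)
The plan is a direct verification resting on the linearity of integration and the symmetry of the weight; no deep machinery is needed once $V_m^\top$ and $U_m^\top$ are recognised as weighted moments of $P_m^\top$. First I would rewrite $V_m^\top$ in a form matching the integrand of the target identity. Since \eqref{vn} reads $V_m=\int_{\mathbb{R}_+}P_m(x)W(x)\,dx$ and $W=W^\top$, transposing gives
$$V_m^\top=\int_{\mathbb{R}_+}W(y;\mt)P_m^\top(y;\mt)\,dy,$$
so $V_m^\top$ is precisely the zeroth weighted moment of $P_m^\top$ against $W$, while $U_m^\top=\int_{\mathbb{R}_+}yW(y;\mt)P_m^\top(y;\mt)\,dy$ is the first such moment by its very definition. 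These are exactly the blocks entering the two equations that define $\eta_m$ and $\zeta_m$.

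Next I would substitute the ansatz $g(y;\mt)=P_m^\top(y;\mt)+P_{m-1}^\top(y;\mt)\eta_m+P_{m-2}^\top(y;\mt)\zeta_m$ into the first integral. Because $\eta_m$ and $\zeta_m$ are $y$-independent matrices standing to the right of the polynomial factors, they pull out of the integral on the right, yielding
$$\int_{\mathbb{R}_+}W(y;\mt)g(y;\mt)\,dy=V_m^\top+V_{m-1}^\top\eta_m+V_{m-2}^\top\zeta_m,$$
which vanishes by the first hypothesis. Performing the identical manipulation with the extra factor $y$ in the integrand produces
$$\int_{\mathbb{R}_+}yW(y;\mt)g(y;\mt)\,dy=U_m^\top+U_{m-1}^\top\eta_m+U_{m-2}^\top\zeta_m=0$$
by the second hypothesis, which completes the verification.

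The computation itself is routine, so there is no real obstacle in the proof; the only point demanding care is the noncommutative bookkeeping, namely keeping $\eta_m$ and $\zeta_m$ on the right of $W(y)P_{m-j}^\top(y)$ throughout, which is exactly why the defining equations are written with these factors acting on the right. The genuine content of the statement is hidden in its hypothesis: the existence of $\eta_m,\zeta_m$ amounts to solvability of a $2\times2$ block linear system whose coefficient matrix is built from $V_{m-1}^\top,V_{m-2}^\top,U_{m-1}^\top,U_{m-2}^\top$. I would remark that this solvability, assumed here rather than established, is the only place where an obstruction to the construction could arise.
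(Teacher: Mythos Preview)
Your proof is correct and is precisely the direct verification the paper has in mind; the paper in fact states this proposition without an explicit proof, treating it as immediate from the definitions of $V_m^\top$ and $U_m^\top$, which is exactly what you carry out.
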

To solve the linear system, $\eta_m$ and $\zeta_m$ have quasideterminant expressions
\begin{align*}
\eta_m=\left|\begin{array}{ccc}
V_{m-2}^\top&V_{m-1}^\top&V_m^\top\\
U_{m-2}^\top&U_{m-1}^\top&U_m^\top\\
0&\mathbb{I}_p&\boxed{0}\end{array}
\right|,\quad\zeta_m=\left|\begin{array}{ccc}
V_{m-2}^\top&V_{m-1}^\top&V_m^\top\\
U_{m-2}^\top&U_{m-1}^\top&U_m^\top\\
\mathbb{I}_p&0&\boxed{0}\end{array}
\right|.
\end{align*}
Moreover, if we take $t_3$-derivative to equation
\begin{align*}
&\left\langle P_{n+1}(x;\mt)+a_nP_n(x;\mt), P^\top_{m}(y;\mt)+P_{m-1}^\top(y;\mt)\eta_m+P_{m-2}^\top(y;\mt)\zeta_m\right\rangle\\&\quad=H_{n+1}\left(\delta_{n+1,m}+\eta_m\delta_{n+1,m-1}+\zeta_m\delta_{n+1,m-2}\right)+a_nH_n\left(\delta_{n,m}+\eta_m\delta_{n,m-1}+\zeta_m\delta_{n,m-2}\right),
\end{align*}
then when $m<n$, 
\begin{align}\label{sub1}
\left\langle \p_{t_3}(P_{n+1}(x;\mt)+a_nP_n(x;\mt)),P^\top_{m}(y;\mt)+P_{m-1}^\top(y;\mt)\eta_m+P_{m-2}^\top(y;\mt)\zeta_m\right\rangle=0,
\end{align}
and when $m=n$, 
\begin{align}\label{sub2}
\left\langle \p_{t_3}(P_{n+1}(x;\mt)+a_nP_n(x;\mt)),P^\top_{n}(y;\mt)+P_{n-1}^\top(y;\mt)\eta_n+P_{n-2}^\top(y;\mt)\zeta_n\right\rangle=\p_{t_3}(a_nH_n).
\end{align}
By writing
\begin{align*}
\p_{t_3}\left(
P_{n+1}(x;\mt)+a_nP_n(x;\mt)
\right)=\sum_{i=0}^n \gamma_{n,i}^{(3)}P_{i}(x;\mt),
\end{align*}
above formulas \eqref{sub1} and \eqref{sub2} give an inhomogeneous linear system with $(n+1)$ variables and $(n+1)$ equations. Therefore, an evolution of the wave function with respect to $t_3$ could be obtained, and the compatibility condition leads to a higher order noncommutative integrable system.

\section{Concluding remarks}\label{sec4}
In this paper, we generalized the Cauchy bi-orthogonal polynomials to the matrix-valued version, and obtain a novel noncommutative integrable system. We provide a Lax pair with noncommutative variables and make use of a quasi-determinant technique to verify the solutions. However, there are some interesting questions to be studied. One is to study the Pfaffian version. It is noted that in \cite{li19}, the solutions of C-Toda lattice \eqref{c-toda} were verified by Pfaffian identities. Therefore, it is expected to obtain a Pfaffian version to verify these integrable systems. Another question is a noncommutative version of fractional differential operator. In \cite{casper21}, a fractional differential operator is an operator that can be represented in the form $A^{-1}B$ for two differential operators $A$ and $B$. Therefore, the Lax equation \eqref{lax} is valid for a fractional differential operator with noncommutative variables. It is interesting to develop a Sato-Krichever theory in noncommutative circumstances.  

\appendix
\section*{Appendix: A quick introduction to quasideterminants}
\setcounter{equation}{0}
\renewcommand\theequation{A.\arabic{equation}}

Quasideterminants, introduced by Gelfand \textsl{et al} in the early 1990s \cite{gelfand91},  are important and effective tools for noncommutative algebra. 
Here we give a quick introduction to quasideterminants, including the basic definitions and some identities that we used in previous sections.
Consider an $n\times n$ matrix $A$ over a noncommutative ring $\mathcal{R}$.
In general, it has $n^2$ quasideterminants.
One may calculate every quasideterminant $|A|_{i,j}$, $1\leq i, j\leq n$, as follows
\begin{align*}
|A|_{i,j}=a_{i,j}-r_i^j\left(A^{i,j}\right)^{-1}c_j^i,
\end{align*}
where $A^{i,j}$, denoting the matrix obtained from $A$ by deleting the $i$th row and $j$th column, is an invertible square matrix over $\mathcal{R}$; $r_i^j$ is the row vector over $\mathcal{R}$ obtained form the $i$th row of $A$ by deleting the $j$th entry; $c_j^i$ is the column vector over $\mathcal{R}$ obtained from the $j$the column of $A$ by deleting the $i$th entry. 
Here the noncommutative ring is supposed to be a matrix ring, with the purpose to be in the frame of matrix-valued measure.
Through reordering rows and columns so that the "expansion point" is bottom right, a more graphic notation for the quasideterminant by boxing the element $a_{i,j}$ can be adopted as follows
\begin{align*}
|A|_{i,j}&=\left|\begin{array}{ccccccc}
a_{1,1}   &\cdots&a_{1,j-1}   &a_{1,j+1}    &\cdots &a_{1,n}    &a_{1,j}  \\
  \vdots   &         &\vdots       &\vdots         &\vdots &               &\vdots \\
a_{i-1,1} &\cdots&a_{i-1,j-1} &a_{i-1,j+1}  &\cdots &a_{i-1,n}  &a_{i-1,j} \\
a_{i+1,1}&\cdots&a_{i+1,j-1}&a_{i+1,j+1} &\cdots &a_{i+1,n} &a_{i+1,j} \\
  \vdots  &          &\vdots       &\vdots         &\vdots &               &\vdots\\
a_{n,1}  &\cdots&a_{n,j-1}    &a_{n,j+1}    &\cdots &a_{n,n}   &a_{n,j} \\
a_{i,1}   &\cdots&a_{i,j-1}     &a_{i,j+1}     &\cdots  &a_{i,n}    &\boxed{a_{i,j}}\\
\end{array}
\right|\\
&=\left|\begin{array}{cc}
A^{i,j} &c_j^i \\
r_i^j   &\boxed{a_{i,j}}\\\end{array}
\right|
=a_{i,j}-r_i^j\left(A^{i,j}\right)^{-1}c_j^i.
\end{align*}

One important application of quasideterminants is to solve the linear system with noncommutative coefficients. 
Let $A=(a_{i,j})$ be an $n\times n$ matrix over a ring $\mathcal{R}$. 
Assume that all the quasideterminants $|A|_{i,j}$ are well-defined and invertible. Then
\begin{align*}
\left\{\begin{array}{c}
a_{1,1}x_1+\cdots+a_{1,n}x_n=\xi_1\\
\vdots\\
a_{n,1}x_1+\cdots+a_{n,n}x_n=\xi_n\end{array}
\right.
\end{align*}
have solution $x_i\in\mathcal{R}$ if and only if
\begin{align*}
x_i=\sum_{j=1}^n|A|_{j,i}^{-1}\xi_j.
\end{align*}

Below we show some important identities of quasideterminants \cite{gelfand91, gelfand92, gelfand05}, which are useful in the verifications of solutions for noncommutative integrable systems. 
One is the non-commutative Jacobi identity, which can be written as 
\begin{align}\label{ncj}
\left|\begin{array}{ccc}
A &B &C \\
D &f  &g  \\
E &h &\boxed{i}\\
\end{array}
\right|
=\left|\begin{array}{cc} A &C \\ E &\boxed{i}\\ \end{array} \right|
-\left|\begin{array}{cc} A &B \\ E &\boxed{h}\\ \end{array} \right|
\left|\begin{array}{cc} A &B \\ D &\boxed{f}\\ \end{array} \right|^{-1}
\left|\begin{array}{cc} A &C \\ D &\boxed{g}\\ \end{array} \right|.
\end{align}
where $A$ should be invertible. 
By utilizing the noncommutative Jacobi identity, one could obtain the so-called homological relation
\begin{align}
\begin{aligned}
\label{homo}
&\left|\begin{array}{ccc}
A&B&C\\
D&f&g\\
E&\boxed{h}&i
\end{array}
\right|=\left|\begin{array}{ccc}
A&B&C\\
D&f&g\\
E&h&\boxed{i}
\end{array}
\right|\left|\begin{array}{ccc}
A&B&C\\
D&f&g\\
0&\boxed{0}&1
\end{array}
\right|,\\
&\left|\begin{array}{ccc}
A&B&C\\
D&f&\boxed{g}\\
E&{h}&i
\end{array}
\right|=\left|\begin{array}{ccc}
A&B&0\\
D&f&\boxed{0}\\
E&h&1
\end{array}
\right|\left|\begin{array}{ccc}
A&B&C\\
D&f&g\\
E&h&\boxed{i}
\end{array}
\right|.
\end{aligned}
\end{align}

Besides, it is necessary to consider the derivative formulae for quasideterminants.
Considering the expansion of quasideterminant
\begin{align*}
\left|\begin{array}{cc}
A&B\\
C&\boxed{d}\end{array}
\right|=d-CA^{-1}B,
\end{align*}
and assume that $A,\,B,C$ and $d$ are $t$-dependent,
then we have the general derivative formula
\begin{align}\label{dqd}
\left|\begin{array}{cc}
A &B \\
C &\boxed{d} 
\end{array}
\right|'
=d'-C'A^{-1}B-CA^{-1}B'+CA^{-1}A'A^{-1}B,
\end{align}
where $'$ means the derivative with respect to $t$. 
Especially, if the quasideterminant is well structured,  then derivative formulae will be modified. 
One is the noncommutative version of Grammian determinant.
If $A$ is a Grammian-like matrix satisfying $A'=\sum_{i=1}^{k}E_iF_i$ where $E_i$ (resp. $F_i$) is a  column (resp. row) vector of appropriate length. Then we can factorise the RHS of \eqref{dqd} to obtain 
\begin{align}\label{dqd-1}
\left|\begin{array}{cc}
A &B \\
C &\boxed{d} 
\end{array}
\right|'=d'+\left|\begin{array}{cc}
A &B \\
C' &\boxed{0} 
\end{array}
\right|+\left|\begin{array}{cc}
A &B' \\
C &\boxed{0} 
\end{array}
\right|+
\sum_{i=1}^{k}\left|\begin{array}{cc}
A &E_i \\
C &\boxed{0} 
\end{array}
\right|
\left|\begin{array}{cc}
A &B\\
F_i &\boxed{0} 
\end{array}
\right|.
\end{align}
Another is the noncommutative version of the Wronskian determinant. 
By inserting the identity matrix expressed as $\sum_{j=1}^{n}e_j^\top e_j$, where $e_j=(0,\cdots,\mathbb{I}_p,\cdots,0)$ is the block unit vector, then the above derivative for quasideterminants can be written as
\begin{align}\label{dqd-2}
\left|\begin{array}{cc}
A &B \\
C &\boxed{d} 
\end{array}
\right|'
=\left|\begin{array}{cc}
A &B \\
C' &\boxed{d'} 
\end{array}
\right|
+\sum_{j=1}^{n}\left|\begin{array}{cc}
A &e_j^\top \\
C &\boxed{0} 
\end{array}
\right|
\left|\begin{array}{cc}
A &B\\
\left(A^j\right)' &\boxed{\left(B^j\right)'} 
\end{array}
\right|
\end{align}
and
\begin{align}\label{dqd-3}
\left|\begin{array}{cc}
A &B \\
C &\boxed{d} 
\end{array}
\right|'=\left|\begin{array}{cc}
A &B' \\
C &\boxed{d'} 
\end{array}
\right|
+\sum_{j=1}^{n}\left|\begin{array}{cc}
A &\left(A_j\right)' \\
C &\boxed{\left(C_j\right)'} 
\end{array}
\right|
\left|\begin{array}{cc}
A &B\\
e_j &\boxed{0} 
\end{array}
\right|
\end{align}
where $A^j=e_j A$ and $A_j=A e_j^\top$ denote the $j$th row and column of matrix $A$ respectively.

\section*{Acknowledgement}
This work is partially funded by grants (NSFC12101432, NSFC12175155, NSFC11971322).

\end{document}